\newtheorem{thm}{Theorem}
\newtheorem{lem}{Lemma}
\newtheorem{obs}{Observation}
\newtheorem{coro}{Corollary}
\theoremstyle{definition}
\newtheorem{defn}{Definition}
\theoremstyle{remark}
\newtheorem{rem}{Remark}
\newcommand{\set}[1]{\left\{#1\right\}}
\newcommand{\card}[1]{\left|#1\right|}
\newcommand{\bigoh}[1]{\mathcal{O}\left(#1\right)}
\newcommand{\epsagreeformula}[3]{\frac{\card{N_{#1}(#2) \Delta N_{#1}(#3)}}{\max\set{\card{N_{#1}(#2)},\card{N_{#1}(#3)}}}}
\newcommand{\nonagreement}[3]{\textsc{NonAgreement}_{#1}\left(#2,#3\right)}
\newcommand{\agreecnt}{\textsc{AgreeCnt}\xspace}
\newcommand{\epsagree}{$\varepsilon$-agreement\xspace}
\newcommand{\epslight}{$\varepsilon$-light\xspace}
\newcommand{\epshigh}{$\varepsilon$-heavy\xspace}
\newcommand{\np}{\textbf{NP}}
\newcommand{\nao}[1]{\mathbb{NAO}\left(#1\right)}
\newcommand{\lth}[2]{\mathbb{LTH}_{#1}\left(#2\right)}
\DeclareMathOperator{\cost}{cost}
\newcommand{\epsschedule}{$\varepsilon$-\textsc{Schedule}\xspace}
\newcommand{\epssched}{\varepsilon\textsc{-Schedule}\xspace}
\title{Correlation Clustering Algorithm for Dynamic Complete Signed Graphs: An Index-based Approach}
\author{Ali Shakiba}
\date{\footnotesize{Department of Computer Science, Vali-e-Asr University of Rafsanjan, Rafsanjan, Iran. \\ \footnotesize{\texttt{ali.shakiba@vru.ac.ir;a.shakiba.iran@gmail.com}}}}
\begin{document}
	
	\maketitle
	
	\begin{abstract}
		In this paper, we reduce the complexity of approximating the correlation clustering problem from $\bigoh{m\times\left( 2+ \alpha (G) \right)+n}$ to $\bigoh{m+n}$ for any given value of $\varepsilon$ for a complete signed graph with $n$ vertices and $m$ positive edges where $\alpha(G)$ is the arboricity of the graph. Our approach gives the same output as the original algorithm and makes it possible to implement the algorithm in a full dynamic setting where edge sign flipping and vertex addition/removal are allowed. Constructing this index costs $\bigoh{m}$ memory and $\bigoh{m\times\alpha(G)}$ time. We also studied the structural properties of the non-agreement measure used in the approximation algorithm. The theoretical results are accompanied by a full set of experiments concerning seven real-world graphs. These results shows superiority of our index-based algorithm to the non-index one by a decrease of \%34 in time on average. 

\noindent\textbf{Keywords:} Correlation clustering $\cdot$ Dynamic graphs $\cdot$ Online Algorithms
	\end{abstract}

	\section{Introduction}\label{sec:introduction}
		
Clustering is one of the most studied problems in machine learning with various applications in analyzing and visualizing large datasets. There are various models and technique to obtain a partition of elements, such that elements belonging to different partitions are dissimilar to each other and the elements in the same partition are very similar to each other. 

The problem of correlation clustering, introduced in \cite{bbc04-correlation-clustering}, is known to be an \np-hard problem for the disagree minimization. Therefore, several different approximation solutions based on its IP formulation exist in the literature. Recently, the idea of a 2-approximation algorithm in \cite{bbc04-correlation-clustering} is extended in \cite{clmnpt21-ICML-correlation-clustering-in-constant-many-parallel-rounds} for constructing a $\bigoh{1}$-approximation algorithm. The experiments in \cite{clmnpt21-ICML-correlation-clustering-in-constant-many-parallel-rounds} show acceptable performance for this algorithm in practice, although its theoretical guarantee can be too high, e.g. $1\,442$ for $\beta=\lambda=\frac{1}{36}$. In \cite{clmp22-ICML-online-consistent-correlation-clustering}, this algorithm is extended to an online setting where just vertex additions are allowed, and whenever a new vertex is added, it reveals all its positively signed edges. Shakiba in \cite{shakiba2022online} studied the effect of vertex addition/removal and edge sign flipping in the underlying graph to the final clustering result, in order to make the algorithm suitable for dynamic graphs. However, one bottleneck in this way is computing the values of \textsc{NonAgreement} among the edges and identifying the $\varepsilon$-lightness of vertices. The current paper proposes a novel indexing scheme to remedy this and make the algorithm efficient, not just in terms of dynamic graphs, but for even dynamic hyper-parameter $\varepsilon$. Our proposed method, in comparison with the online method of \cite{clmp22-ICML-online-consistent-correlation-clustering} is that we allow a full dynamic setting, i.e. vertex addition/removal and edge's sign flipping. It is known that any online algorithm for the correlation clustering problem has at least $\Omega(n)$-approximation ratio \cite{mathieu2010online}. Note that the underlying algorithm used in the current paper is consistent, as is shown via experimental results \cite{clmp22-ICML-online-consistent-correlation-clustering}.

The rest of the paper is organized as follows: In Section \ref{sec:contrib}, we highlight our contributions. This is followed by a reminding some basic algorithms and results in Section \ref{sec:preliminary}. Then, we introduce the novel indexing structure in Section \ref{sec:indexing:structure} and show how it can be employed to enhance the running-time of the approximate correlation clustering algorithm. Then, we show how to maintain the proposed indices in a full dynamic settings in Section \ref{sec:index:maintenance}. In Section \ref{sec:experiments}, we give an extensive experiments which accompanies the theoretical results and show the effectiveness of the proposed indexing structure. Finally, a conclusion is drawn. 

\subsection{Our Contribution}\label{sec:contrib}
	In this paper, we simply ask 
	\begin{quote}
		``How can one reduce the time to approximate a correlation clustering of the input graph \cite{clmnpt21-ICML-correlation-clustering-in-constant-many-parallel-rounds} for varying values of $\varepsilon$?''
	\end{quote}
	We also ask 
	\begin{quote}
		``How can we make the solution to the first question an online solution for dynamic graphs?''
	\end{quote}
	Our answer to the first question is devising a novel indexing-structure which is constructed based on the structural properties of the approximation algorithm and its \textsc{NonAgreement} measure. As our experiments in Section \ref{sec:experiments} show, the proposed method enhanced the total running-time of querying the clustering for about $\%34$ on average for seven real-world datasets. Then, we make this structure online to work with dynamic graphs based on theoretical results in \cite{shakiba2022online}. The construction of the index itself is highly parallelizable, up to the number of the vertices in the input graph. The idea for parallelization is simple: construct each $\nao{v}$ in the $\nao{G}$ with a separate parallel thread. 
	We also study the intrinsic structures in the \textsc{NonAgreement} measure, to bake more efficient algorithms for index-maintenance due to updates to the underlying graph. 
	More precisely, we show that using the proposed index structure, we can find a correlation clustering for a graph for any given value of $\varepsilon$ in time $\bigoh{m+n}$, compared to the $\bigoh{m\times\left(2+\alpha(G)\right)+n}$ time for the CC. The pre-processing time of the ICC would be $\bigoh{m\times \alpha(G)}$ with $\bigoh{m}$ space complexity.

	\section{Preliminaries}\label{sec:preliminary}
		
Let $G=(V,E)$ be a complete undirected signed graph with $\card{V}=n$ vertices. The set of edges $E$ is naturally partitioned into positive and negative signed edges, $E^+$ and $E^-$, respectively. Then, we use $m$ to denote $\card{E^+}$. 
The correlation clustering problem is defined as 
\begin{equation}
	\label{eq:min:disagree:corr:clustering}
	\cost(\mathcal{C}) = \sum_{\substack{\set{u,v}\in E^+\\u\in C_i,v\in C_j,i\neq j}} 1 + \sum_{\substack{\set{u,v}\in E^-\\u,v\in C_i}} 1,
\end{equation}
where $\mathcal{C}=\set{C_1,\ldots,C_\ell}$ is a clustering. Note that this is the min-disagree variant of the problem. 

The constant factor approximation algorithm of \cite{clmnpt21-ICML-correlation-clustering-in-constant-many-parallel-rounds} is based on two main quantities: (1) $\varepsilon$-agreement of a positively signed edge $\set{u,v}$, i.e. $u$ and $v$ are in $\varepsilon$-agreement if and only if $\nonagreement{G}{u}{v} = \epsagreeformula{G}{u}{v} < \varepsilon$, and (2) $\varepsilon$-lightness, where a vertex $u$ is said to be $\varepsilon$-light if $\frac{\agreecnt_{G^+}(u)}{\card{N_{G^+}(u)}} < \varepsilon$ where $\agreecnt_{G^+}(u)=\card{\set{w\in V|u \text{ and } v \text{ are in } \varepsilon\text{-agreement}}}$. Note that a vertex which is not $\varepsilon$-light is called $\varepsilon$-heavy. This is a $2+\frac{4}{\varepsilon}+\frac{1}{\varepsilon^2}$-approximation algorithm, as is shown in \cite{clmp22-ICML-online-consistent-correlation-clustering}. This algorithm is described in Algorithm \ref{alg:correlation:clustering}, which we will refer to the CC algorithm, for short. 
\begin{algorithm}[h]
	\caption{\textsc{CorrelationClustering($G$)} \cite{clmnpt21-ICML-correlation-clustering-in-constant-many-parallel-rounds}}
	\label{alg:correlation:clustering}
	\begin{algorithmic}[1]
		\Procedure{CorrelationClustering}{$G,\varepsilon$}
		\State Let $G^+=G[E^+]$ where $E^+$ is the set of edges whose sign is $+$
		\State Discard all edges whose endpoints are not in \epsagree
		\State Discard all edges between two $\varepsilon$-light vertices
		\State Let $\widetilde{G^+}$ be the \emph{sparsified} graph $G^+$ after performing previous two operations
		\State Let $\mathcal{C}$ be the collection of connected components in $\widetilde{G^+}$
		\State \Return $\mathcal{C}$ as the output clustering
		\EndProcedure
	\end{algorithmic}
\end{algorithm}

Shakiba in \cite{shakiba2022online} studied theoretical foundation of the CC algorithm in a full dynamic setting. The following result is a summary of Table 1, Corollary 1, and Theorem 4 in \cite{shakiba2022online}.
\begin{thm}
	\label{thm:locality:change:shakiba}
	Suppose the sign of an edge $u=\set{u,v}$ is flipped. Then, the non-agreement and $\varepsilon$-lightness of vertices other than the ones whose distance to either $u$ and $v$ is more than two would not change. 
\end{thm}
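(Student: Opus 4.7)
The plan is to unfold the definitions of \textsc{NonAgreement} and $\varepsilon$-lightness and to track exactly how a single edge sign flip propagates through them. The whole argument will rest on one elementary observation: flipping the sign of $\{u,v\}$ alters precisely two positive neighborhoods, $N_{G^+}(u)$ and $N_{G^+}(v)$, since for every vertex $w \notin \{u,v\}$ no edge incident to $w$ has its sign flipped and hence $N_{G^+}(w)$ is left intact.

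From that seed I would first handle non-agreement. Because $\nonagreement{G}{x}{y}$ is a function of $N_{G^+}(x)$ and $N_{G^+}(y)$ alone, its value can change only if $\{x,y\} \cap \{u,v\} \neq \emptyset$; in particular, any edge both of whose endpoints lie at $G^+$-distance at least $2$ from $\{u,v\}$ retains both its non-agreement score and its $\varepsilon$-agreement bit. Next, I would inspect the ratio $\agreecnt_{G^+}(x)/|N_{G^+}(x)|$ that defines $\varepsilon$-lightness of $x$ piece by piece: the denominator can shift only if $x \in \{u,v\}$, and the numerator can shift only if, for some $w \in N_{G^+}(x)$, the $\varepsilon$-agreement bit of $\{x,w\}$ flips; by the previous step, this forces $w \in \{u,v\}$. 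Hence any change in $x$'s lightness requires $x$ to be at $G^+$-distance at most $1$ from $u$ or $v$, so vertices strictly farther than distance $2$ from both $u$ and $v$ fall outside the affected region, which is the claim.

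The main thing to be careful about is which graph the "distance" refers to: it must be the positive subgraph $G^+$, since the underlying signed graph $G$ is complete and has diameter $1$, which would otherwise render the statement vacuous. Beyond that, no nontrivial estimates are required; the argument is a direct unpacking of definitions, and the stated distance-$2$ bound is a slightly conservative wrapper around the actually affected distance-$1$ region, which is exactly the granularity needed for efficient index maintenance in the next section.
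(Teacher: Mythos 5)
Your proof is correct, and it is worth noting up front that the paper itself offers no proof to compare against: Theorem~\ref{thm:locality:change:shakiba} is imported verbatim as a summary of results from \cite{shakiba2022online}, so your definitional unpacking is, within this document, the only argument on record. On its merits it is sound. The seed observation --- that flipping $\set{u,v}$ perturbs exactly the two positive neighborhoods $N_{G^+}(u)$ and $N_{G^+}(v)$ and nothing else --- is the right one, and since $\nonagreement{G^+}{x}{y}$ is a function of $N_{G^+}(x)$ and $N_{G^+}(y)$ alone, your conclusion that only edges incident to $u$ or $v$ can change their agreement bit follows immediately; the lightness step then correctly localizes changes to vertices at $G^+$-distance at most $1$. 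Your parenthetical that the theorem's distance-$2$ radius is a conservative wrapper around a distance-$1$ truth is accurate, and in fact your analysis is sharper than what the paper later uses operationally: Lemma~\ref{lem:edge:sign:flip:update:nao} recomputes \textsc{NonAgreement} for \emph{all} pairs inside $\nao{u}\cup\nao{v}$, a strict superset of the pairs your argument shows can actually move. Your remark about which metric ``distance'' lives in is also well taken --- it must be $G^+$, else the complete signed graph makes the statement vacuous --- and you correctly read through the garbled quantifier in the theorem statement (``other than the ones whose distance \dots is more than two'') to the intended meaning. Two small points to tighten if you write this up: state explicitly whether $N_{G^+}(\cdot)$ is the open or closed neighborhood (the paper's Lemmas~\ref{lem:nao:space:complexity} and~\ref{lem:lightness:wrt:nao} suggest the closed convention of the source algorithm; your argument survives either way, but the reader should not have to check), and note that when the flip is from $-$ to $+$ or vice versa the \emph{domain} of the agreement relation changes because $\set{u,v}$ itself enters or leaves $E^+$ --- this affects only $u$ and $v$ and so stays inside your affected region, but it deserves one sentence.
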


The arboricity of the graph $G$ is the minimum number of edge-disjoint spanning forests into which $G$ can be decomposed. The following lemma for arboricity is useful in bounding the number of operations. 
\begin{lem}{Lemma 2 in \cite{cn85-arboricity-subgraph-listing-algorithms}}
	\label{lem:arboricity}
	Suppose the graph $G=(V,E)$ has $n$ vertices with $m$ edges. Then, 
	\begin{equation}
		\sum_{\set{u,v}\in E} \min \set{\deg_{G}(u), \deg_{G}(v)} \leq 2a(G) \times m.
	\end{equation}
\end{lem}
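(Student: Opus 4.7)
The plan is to reduce the inequality to a statement about a carefully chosen orientation of $G$, which is precisely what the arboricity lets us construct.

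First I would invoke the definition of arboricity to decompose the edge set of $G$ into $a(G)$ edge-disjoint spanning forests $F_1,\dots,F_{a(G)}$. In each forest, I would root every tree arbitrarily and orient every edge from child toward parent. Since each non-root vertex has exactly one parent in each forest, every vertex has out-degree at most one in each $F_i$, and hence out-degree at most $a(G)$ in the combined orientation of $G$. This gives an orientation $D$ of $E$ with $d^+_D(v) \leq a(G)$ for all $v\in V$. (Equivalently, one could cite the Hakimi/Nash--Williams characterization of graphs admitting an orientation of bounded max out-degree, but the forest decomposition is the most self-contained route.)

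Next I would use this orientation to upper-bound the sum. For each edge $\{u,v\}\in E$ oriented $u\to v$ in $D$, the trivial inequality $\min\{\deg_G(u),\deg_G(v)\}\le \deg_G(u)$ holds. Summing over all edges and regrouping by the tail of the orientation,
\begin{equation*}
\sum_{\{u,v\}\in E}\min\{\deg_G(u),\deg_G(v)\}
\;\le\; \sum_{u\to v \,\in\, D}\deg_G(u)
\;=\; \sum_{u\in V} d^+_D(u)\,\deg_G(u).
\end{equation*}
Applying the bound $d^+_D(u)\le a(G)$ and the handshake identity $\sum_u\deg_G(u)=2m$ then yields the claimed $2a(G)\cdot m$.

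The only real obstacle is producing the orientation with small out-degree; everything afterwards is a one-line double counting. If one did not want to rely on Nash--Williams, the forest-decomposition argument above is fully elementary, so the whole proof fits in a short paragraph. An alternative but essentially equivalent charging scheme would orient each edge toward its lower-degree endpoint and then bound in-degrees using the same forest decomposition; I would avoid that route because it requires an extra combinatorial step to control in-degrees, whereas controlling out-degrees via rooted forests is immediate.
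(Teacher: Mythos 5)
Your proof is correct and complete. The paper itself does not prove this lemma --- it imports it verbatim as Lemma~2 of the cited Chiba--Nishizeki paper --- and your argument is essentially the classical one from that reference: decompose $E$ into $a(G)$ edge-disjoint spanning forests, root each tree and orient child-to-parent so every vertex has out-degree at most $1$ per forest (hence at most $a(G)$ overall), bound $\min\{\deg_G(u),\deg_G(v)\}$ by the degree of the tail, and finish with $\sum_{u\in V} d^+_D(u)\deg_G(u)\le a(G)\sum_{u\in V}\deg_G(u)=2a(G)m$. Every step checks out; nothing is missing.
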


	\section{Proposed Method}\label{sec:proposed}	
		In this section, we describe our novel indexing structure. This structure allows dynamic queries of the correlation clustering with varying values of $\varepsilon$ for dynamic graphs. The proposed algorithm which uses the indexing structure would be called ICC, or indexed-based correlation clustering. 

\subsection{Indexing structure}\label{sec:indexing:structure}

For an edge $e=\set{u,v}$ with positive sign, we define its \emph{\epsagree distance} as $\nonagreement{G^+}{u}{v}$. Intuitively, this is the infimum of the values $\varepsilon$ which the nodes $u$ and $v$ are not in \epsagree. Let define the set  $\mathcal{E}=\set{\nonagreement{G^+}{u}{v}\vert e=\set{u,v}\in E^+}$. Without loss of generality, let $\mathcal{E}=\set{\varepsilon_0,\ldots,\varepsilon_{\ell-1}}$ with the ordering $\min\mathcal{E}=\varepsilon_0<\varepsilon_1<\cdots<\varepsilon_{\ell-1}=\max\mathcal{E}$. For a fixed value of $\varepsilon$, let $G^+_\varepsilon=(V,E^+_\varepsilon)$ where $E^+_\varepsilon=\set{e=\set{u,v}\in E^+\vert \nonagreement{G^+}{u}{v}<\varepsilon}$. 
\begin{obs}
	For all $\varepsilon \leq \varepsilon_0$, $G^+_\varepsilon$ is the null graph, i.e. a graph on all nodes without any edges. Moreover, for all $\varepsilon > \varepsilon_\ell$, $G^+_\varepsilon=G^+$. 
\end{obs}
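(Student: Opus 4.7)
The plan is to prove both parts directly from the definitions of $E^+_\varepsilon$ and of $\mathcal{E}$, since the observation is essentially a consistency check on the ordering of the thresholds $\varepsilon_0 < \cdots < \varepsilon_{\ell-1}$.

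For the first part, I would fix $\varepsilon \leq \varepsilon_0$ and show $E^+_\varepsilon = \emptyset$ by taking an arbitrary $e = \{u,v\} \in E^+$ and noting that by definition of $\mathcal{E}$, the value $\nonagreement{G^+}{u}{v}$ lies in $\mathcal{E}$, hence is at least $\min \mathcal{E} = \varepsilon_0 \geq \varepsilon$. Since membership in $E^+_\varepsilon$ requires the \emph{strict} inequality $\nonagreement{G^+}{u}{v} < \varepsilon$, no edge qualifies, and $G^+_\varepsilon = (V, \emptyset)$ is the null graph on $V$.

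For the second part, I would fix $\varepsilon > \varepsilon_{\ell-1}$ (I read the statement as referring to $\max \mathcal{E} = \varepsilon_{\ell-1}$, correcting the apparent typo ``$\varepsilon_\ell$'') and take any $e = \{u,v\} \in E^+$. Again $\nonagreement{G^+}{u}{v} \in \mathcal{E}$, so $\nonagreement{G^+}{u}{v} \leq \varepsilon_{\ell-1} < \varepsilon$, placing $e$ in $E^+_\varepsilon$. Hence $E^+ \subseteq E^+_\varepsilon$, and the reverse inclusion holds by definition of $E^+_\varepsilon$ as a subset of $E^+$; therefore $G^+_\varepsilon = G^+$.

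There is no substantive obstacle here; the only subtlety is making sure the strict versus non-strict inequalities line up with the definition of \epsagree (which uses $<\varepsilon$), so that the boundary case $\varepsilon = \varepsilon_0$ correctly yields the null graph. I would state this explicitly to justify why the first inequality is allowed to be non-strict while the second must be strict.
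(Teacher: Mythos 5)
Your proof is correct and is exactly the direct-from-definitions check that the paper leaves implicit (the Observation is stated without an accompanying proof), including the careful handling of the strict inequality in $E^+_\varepsilon$ at the boundary case $\varepsilon=\varepsilon_0$. Your reading of ``$\varepsilon_\ell$'' as $\max\mathcal{E}=\varepsilon_{\ell-1}$ is reasonable; the paper later stipulates $\varepsilon_\ell>\max\mathcal{E}$, and your argument covers that convention verbatim since $\varepsilon>\varepsilon_\ell$ then implies $\varepsilon>\varepsilon_{\ell-1}$.
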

Next, we introduce the key ingredient to our indexing structure, called $\mathbb{NAO}$. 
\begin{defn}[\textsc{NonAgreement} Node Ordering]
	\label{def:nao}
	The \epsagree ordering for each node $v\in V$, denoted by $\nao{v}$, is defined as an ordered subset of vertices in $G$ where:
	\begin{enumerate}
		\item node $u\in V$ appears in the ordering $\nao{v}$ if and only if $e=\set{u,v}$ is a positive edge in $G$.
		\item for each two distinct vertices $u,w\in V$ which appear in $\nao{v}$, 
			\begin{equation}
				\nonagreement{G^+}{v}{u} < \nonagreement{G^+}{v}{w},
			\end{equation}
			implies $u$ appears before $w$. 
		\item for each node $u\in \nao{v}$, its \epsagree distance is also stored with that node.
	\end{enumerate}
	The \textsc{NonAgreement} node ordering of the graph $G$ is defined as $\nao{G}=\set{(v,\nao{v})\vert v\in V}$. 
\end{defn}
In other words, the $\nao{v}$ is a sorted array of neighboring nodes of $v$ in $G^+$ in their \epsagree distance value. An example \textsc{NonAgreement} node ordering for all vertices in a sample graph is illustrated in Figure \ref{fig:non:agree:order}. 
\begin{figure}[h]
	\centering
	\includegraphics[width=.4\textwidth]{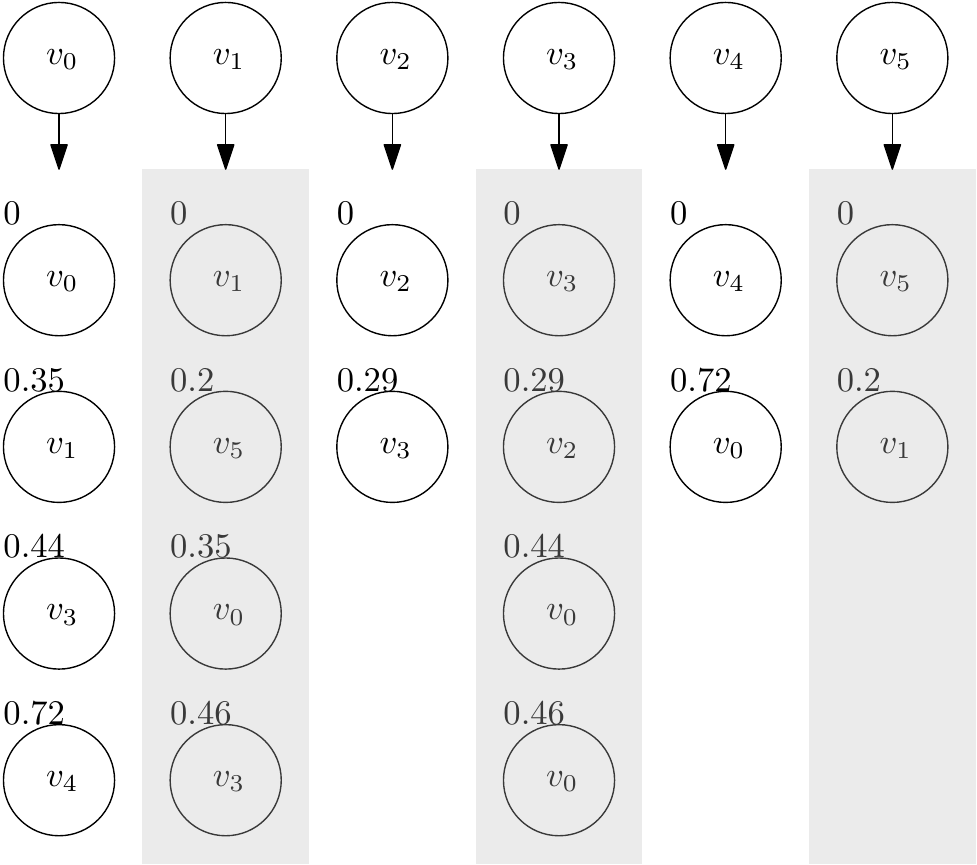}
	\caption{An illustrative example of $\nao{v}$ for an example graph.}
	\label{fig:non:agree:order}
\end{figure}
The space and construction time complexities of the $\mathbb{NAO}(G)$ are investigated in the next two lemmas.
\begin{lem}
	\label{lem:nao:space:complexity}
	The \textsc{NonAgreement} node ordering for a graph $G$, $\nao{G}$, can be represented in $\bigoh{m}$ memory. 
\end{lem}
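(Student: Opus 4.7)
The plan is to prove this by a straightforward handshake-style counting argument, since $\nao{v}$ is essentially a rearrangement of the positive-neighbor list of $v$ together with one extra numeric field per entry.

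First I would unpack the definition: by clauses (1) and (3) of Definition \ref{def:nao}, $\nao{v}$ contains exactly one entry for each positive neighbor of $v$, and each entry stores a vertex identifier together with its $\varepsilon$-agreement distance to $v$. Both pieces of information occupy $\bigoh{1}$ words, so the storage for $\nao{v}$ is proportional to $|N_{G^+}(v)| = \deg_{G^+}(v)$.

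Next I would sum over all vertices and apply the handshake lemma applied to the positive subgraph $G^+$:
\begin{equation*}
\sum_{v\in V} \bigl|\nao{v}\bigr| \;=\; \sum_{v\in V} \deg_{G^+}(v) \;=\; 2\,|E^+| \;=\; 2m.
\end{equation*}
Adding the $\bigoh{1}$ overhead per vertex to store the pair $(v,\nao{v})$ in $\nao{G}$ for vertices incident to at least one positive edge is absorbed in the same bound, since each such vertex contributes at least one entry to the sum above. Hence the total memory for $\nao{G}$ is $\bigoh{m}$, as claimed.

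I do not foresee a genuine obstacle; the only point that requires a little care is the treatment of isolated vertices in $G^+$, which contribute an empty ordering. These can either be omitted from $\nao{G}$ or charged against the ambient vertex storage that is assumed to exist for $G$ itself, so they do not affect the $\bigoh{m}$ bound.
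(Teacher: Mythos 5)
Your proof is correct and follows essentially the same route as the paper's: count the entries of each $\nao{v}$ as $\deg_{G^+}(v)$ (the paper notes the vertex $v$ itself need not be stored explicitly) and sum via the handshake lemma to get $2m$ total entries. Your explicit treatment of isolated vertices is a small addition the paper leaves implicit, but the argument is the same.
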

\begin{proof}
	The number of nodes inside $\nao{v}$ equals to the $\deg_{G^+}(v)+1$, for all vertices in $N_{G^+}(v)$ as well as the vertex $v$ itself. Note that it is not required to explicitly store the vertex $v$ itself in the ordering. Cumulatively, the total size required for representing $\nao{v}$ is $2\times m$ entries. 
\end{proof}
\begin{lem}
	\label{lem:nao:construction:time}
	The time complexity to construct the $\nao{v}$ for all vertices $v\in V$ is $\bigoh{m\times\left(\alpha(G) + \lg m\right)}$ where $\alpha(G)$ is the arboricity of the graph $G$. 
\end{lem}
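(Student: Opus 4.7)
The plan is to split the cost into two phases: (i) computing the value $\nonagreement{G^+}{u}{v}$ for every positive edge, and (ii) sorting each adjacency list $\nao{v}$ by those values.

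For phase (i), the key observation is that
\[
	\nonagreement{G^+}{u}{v} = \frac{\card{N_{G^+}(u)} + \card{N_{G^+}(v)} - 2\card{N_{G^+}(u)\cap N_{G^+}(v)}}{\max\set{\card{N_{G^+}(u)},\card{N_{G^+}(v)}}},
\]
so once the degrees are known, computing the quantity for a single edge $\set{u,v}$ reduces to computing the size of a neighborhood intersection. I would store each adjacency list in a hash table so that membership queries cost $\bigoh{1}$; then for the edge $\set{u,v}$ we can enumerate the endpoints of the smaller list and test membership in the other, costing $\bigoh{\min\set{\deg_{G^+}(u),\deg_{G^+}(v)}}$. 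Summing over all positive edges and invoking Lemma 1 of \cite{cn85-arboricity-subgraph-listing-algorithms} (recalled as Lemma \ref{lem:arboricity} above), the total cost of phase (i) is
\[
	\sum_{\set{u,v}\in E^+} \min\set{\deg_{G^+}(u),\deg_{G^+}(v)} \;\leq\; 2\alpha(G)\cdot m,
\]
i.e. $\bigoh{m\cdot \alpha(G)}$.

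For phase (ii), once each positive edge carries its \epsagree distance, building $\nao{v}$ amounts to sorting the list $N_{G^+}(v)$ by those precomputed keys. Using any comparison sort, this costs $\bigoh{\deg_{G^+}(v)\lg \deg_{G^+}(v)}$ per vertex, hence at most
\[
	\sum_{v\in V} \deg_{G^+}(v)\lg \deg_{G^+}(v) \;\leq\; \lg m \cdot \sum_{v\in V}\deg_{G^+}(v) \;=\; 2m\lg m,
\]
which is $\bigoh{m\lg m}$.

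Adding the two phases yields $\bigoh{m\cdot\alpha(G)+m\lg m}=\bigoh{m\cdot(\alpha(G)+\lg m)}$, as claimed. The main obstacle is phase (i): the naive $\bigoh{\deg(u)+\deg(v))}$ symmetric-difference computation per edge would blow up to $\bigoh{\sum_e(\deg(u)+\deg(v))}=\bigoh{\sum_v \deg^2(v)}$, which can be $\Theta(mn)$. The arboricity trick — enumerating by the smaller endpoint and then appealing to Chiba–Nishizeki — is exactly what avoids this and gives the claimed bound.
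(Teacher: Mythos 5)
Your proposal is correct and follows essentially the same route as the paper's proof: compute each $\nonagreement{G^+}{u}{v}$ via the neighborhood-intersection formula in $\bigoh{\min\set{\deg_{G^+}(u),\deg_{G^+}(v)}}$ time, bound the sum by $2\alpha(G)\cdot m$ using the Chiba--Nishizeki lemma, and then sort each $\nao{v}$ for an additional $\bigoh{m\lg m}$. Your explicit use of hash tables to justify the $\bigoh{1}$ membership tests is a small added implementation detail (the paper instead assumes sorted adjacency lists), but the argument is the same.
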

\begin{proof}
	To compute the value of $\nonagreement{G^+}{u}{v}$ for all edges $e=\set{u,v}\in E^+$, one needs to compute $\card{N_{G^+}(u) \Delta N_{G^+}(v)}$. This requires $\bigoh{\deg_{G^+}(u) + \deg_{G^+}(v)}$ operations to compute the symmetric difference, given the adjacency lists of $u$ and $v$ are sorted. However, we can compute  their intersection and use it to compute the \textsc{NonAgreement} as follows
	\begin{equation}
		\nonagreement{G^+}{u}{v}=\frac{\deg_{G^+}(u) + \deg_{G^+}(v) - 2\times\card{N_{G^+}(u) \cap N_{G^+}(v)}}{\max\set{N_{G^+}(u),N_{G^+}(v)}}.
	\end{equation}
	Hence, the total time required to compute the values of \textsc{NonAgreement} is equal to $$\sum_{\set{u,v}\in E^+} \min \set{\deg_{G^+}(u),\deg_{G^+}(v)},$$ which is known to be bounded by $2\alpha(G)\times m$ (Lemma \ref{lem:arboricity}). Moreover, each $\nao{v}$ is of length $1+\deg_{G^+}(v)$ which requires sorting by the value of their \textsc{NonAgreement} in $\bigoh{\deg_{G^+}(v)\lg\left(\deg_{G^+}(v)\right)}$, which accumulates to $\bigoh{m\lg m}$. 
\end{proof}
The correlation clustering corresponding to a given value of $\varepsilon$ and a graph $G$ is the set of connected components of the graph $\widetilde{G^+}$. Given access to $\nao{G}=\set{\nao{v}\vert v\in V(G)}$, one can respond to the following queries: (1) Is the vertex $v$ an \epshigh vertex or an \epslight? (2) Are the endpoints of an edge $\set{u,v}$ in \epsagree? As we will show in this section, both of these questions can be answered in $\bigoh{1}$ time using the $\mathbb{NAO}$ structure.
For a vertex $v\in V$, its \epslight threshold is defined as $\lth{\varepsilon}{v}=\left\lceil\varepsilon\deg_{G^+}(v)\right\rceil+1$. 
\begin{lem}
	\label{lem:lightness:wrt:nao}
	For an $\varepsilon$ and a vertex $v\in V$, $v$ is \epshigh if and only if the \epsagree distance of the $\lth{\varepsilon}{v}$-th smallest vertex in $\nao{v}$ is less than $\varepsilon$. Otherwise, it is \epslight. 
\end{lem}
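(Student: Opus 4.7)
The plan is to reduce the claim to a single index-comparison inside the sorted array $\nao{v}$. The crucial structural fact is that, by Definition \ref{def:nao}, $\nao{v}$ enumerates the positive neighbors of $v$ in non-decreasing order of their stored \epsagree distance. Therefore, for any index $k$, the $k$-th smallest entry of $\nao{v}$ has stored distance strictly below $\varepsilon$ if and only if the first $k$ entries all do, which in turn is equivalent to saying that at least $k$ positive neighbors of $v$ are in \epsagree with $v$, i.e.\ $\agreecnt_{G^+}(v) \geq k$. This one observation does essentially all the work.

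I would then unwind the definition of $\varepsilon$-lightness: $v$ is \epshigh precisely when $\agreecnt_{G^+}(v) \geq \varepsilon\, \deg_{G^+}(v)$. Because $\agreecnt_{G^+}(v)$ is a non-negative integer, I would argue that this real-valued inequality is equivalent to the integer inequality $\agreecnt_{G^+}(v) \geq \lth{\varepsilon}{v}$, where $\lth{\varepsilon}{v}=\left\lceil \varepsilon\, \deg_{G^+}(v)\right\rceil + 1$. Specializing the observation of the first paragraph to $k = \lth{\varepsilon}{v}$ then yields the stated biconditional: the forward direction (``distance $< \varepsilon$ at position $\lth{\varepsilon}{v}$ implies heavy'') because the first $\lth{\varepsilon}{v}$ positions are all strictly below $\varepsilon$ and hence push $\agreecnt_{G^+}(v)$ past the threshold; the converse because heaviness forces at least $\lth{\varepsilon}{v}$ neighbors to sit inside the sub-$\varepsilon$ prefix of $\nao{v}$.

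The main obstacle I expect is purely arithmetic: carefully matching the ``$+1$'' in the definition of $\lth{\varepsilon}{v}$ against the strict inequality used both in \epsagree and in the light/heavy threshold, and handling separately the cases in which $\varepsilon\, \deg_{G^+}(v)$ is and is not an integer, together with the boundary case $\lth{\varepsilon}{v} > \deg_{G^+}(v)$ (in which the $\lth{\varepsilon}{v}$-th entry does not exist and the vertex is, by convention, declared \epslight). Once the integer reformulation is justified, the rest of the argument is immediate from the sortedness of $\nao{v}$ and invokes no property of the \textsc{NonAgreement} measure beyond the ordering inside each $\nao{v}$, which is precisely what makes the resulting lightness test run in $\bigoh{1}$ time.
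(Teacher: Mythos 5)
Your overall strategy is the same as the paper's: use the sortedness of $\nao{v}$ to translate ``the $k$-th smallest stored distance is below $\varepsilon$'' into ``at least $k$ entries are in \epsagree with $v$,'' and then compare that count against the lightness threshold. However, there is a genuine off-by-one gap in the arithmetic step you identify as the crux. For a non-negative integer $a$ and a real $x\geq 0$, the equivalence is $a\geq x \iff a\geq\lceil x\rceil$, \emph{not} $a\geq\lceil x\rceil+1$ (take $x=2.5$, $a=3$: then $a\geq x$ but $a<\lceil x\rceil+1=4$). So your claimed reformulation ``$v$ is \epshigh iff $\agreecnt_{G^+}(v)\geq\lth{\varepsilon}{v}$'' is false as stated, and no case analysis on whether $\varepsilon\deg_{G^+}(v)$ is an integer will rescue it: with your reading of $\nao{v}$, the correct index to probe would be $\lceil\varepsilon\deg_{G^+}(v)\rceil$, one less than $\lth{\varepsilon}{v}$.

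The missing ingredient is not arithmetic but a convention about the contents of $\nao{v}$. In the paper (see the proof of Lemma~\ref{lem:nao:space:complexity}), $\nao{v}$ has $\deg_{G^+}(v)+1$ entries because $v$ itself is counted as a member of its own ordering (occupying the smallest position, since its \epsagree distance to itself is $0$), and the agreement count is likewise taken to include $v$. Under that convention, the $\lth{\varepsilon}{v}$-th smallest entry being below $\varepsilon$ means $\lceil\varepsilon\deg_{G^+}(v)\rceil+1$ vertices \emph{including $v$} are in \epsagree with $v$, i.e.\ $\lceil\varepsilon\deg_{G^+}(v)\rceil$ genuine neighbors are, which is exactly the \epshigh condition. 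Your version instead declares that ``$\nao{v}$ enumerates the positive neighbors of $v$,'' so your index-to-count translation gives $\agreecnt_{G^+}(v)\geq k$ over neighbors only, and the $+1$ in $\lth{\varepsilon}{v}$ is left unaccounted for. To repair the proof you must either adopt the paper's convention that $v$ is the first entry of $\nao{v}$ (shifting every index by one), or restate the threshold as $\lceil\varepsilon\deg_{G^+}(v)\rceil$; as written, your argument proves the lemma with the wrong index and would misclassify vertices whose agreement count equals $\lceil\varepsilon\deg_{G^+}(v)\rceil$ exactly. The boundary case you mention ($\lth{\varepsilon}{v}$ exceeding the array length, relevant when $\varepsilon>1$) is a reasonable point that the paper itself glosses over.
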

\begin{proof}
	The vertices $u$ and $v$ would be in $\varepsilon$-agreement if and only if $\nonagreement{G^+}{u}{v}<\varepsilon$. Whenever the \epsagree distance of the $\lth{\varepsilon}{v}$-th smallest vertex in $\nao{v}$ is less than $\varepsilon$, it means that there are at least $\left\lceil\varepsilon\deg_{G^+}(v)\right\rceil+1$ vertices, including the $v$ itself, in $\varepsilon$-agreement with $v$. This is equivalent the \epshigh{ness} of the vertex $v$. On the other hand, if the \epsagree distance of the $\lth{\varepsilon}{v}$-th smallest vertex in $\nao{v}$ is greater than or equal to $\varepsilon$, this is equivalent to that the number of vertices in \epsagree with $v$ is less than $\varepsilon \times \deg_{G^+}(v)$, i.e. $v$ is \epslight.
\end{proof}
\begin{lem}
	\label{lem:lightness:wrt:nao:time}
	Given access to $\nao{v}$, for all values of $\varepsilon$ and any vertex $v \in V$, identifying the $\varepsilon$-lightness of $v$ can be accomplished in $\bigoh{1}$.
\end{lem}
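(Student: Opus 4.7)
The plan is to show that the characterization of $\varepsilon$-lightness established in Lemma \ref{lem:lightness:wrt:nao} can be evaluated using only a constant number of elementary operations on $\nao{v}$, provided this structure is stored in a way that supports random access.

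First, I would observe that the degree $\deg_{G^+}(v)$ is available in $\bigoh{1}$ time: either we explicitly cache it alongside $\nao{v}$ during the construction of Lemma \ref{lem:nao:construction:time}, or we read it as the length of $\nao{v}$. Given $\varepsilon$ and this degree, the threshold $\lth{\varepsilon}{v}=\left\lceil\varepsilon\deg_{G^+}(v)\right\rceil+1$ requires only one multiplication, one ceiling, and one addition, all in $\bigoh{1}$.

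Next, I would appeal to the representation of $\nao{v}$ as a sorted array (implicit in the discussion following Definition \ref{def:nao}, where $\nao{v}$ is described as a sorted array of neighbors with their \epsagree distances attached). Accessing the $\lth{\varepsilon}{v}$-th entry of such an array is an $\bigoh{1}$ indexing operation, and retrieving the associated \epsagree distance is also $\bigoh{1}$ since that distance is stored with the entry by item~3 of Definition \ref{def:nao}. A single comparison of this stored value against $\varepsilon$ then decides, by Lemma \ref{lem:lightness:wrt:nao}, whether $v$ is \epshigh or \epslight.

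I do not foresee a substantial obstacle here: the proof is essentially bookkeeping, combining the characterization of Lemma \ref{lem:lightness:wrt:nao} with the fact that $\nao{v}$ supports constant-time rank queries. The only subtlety worth mentioning is the boundary case in which $\lth{\varepsilon}{v}$ exceeds $\card{\nao{v}}$, which by convention should be treated as \epslight; this can be checked by one additional $\bigoh{1}$ comparison between $\lth{\varepsilon}{v}$ and the length of $\nao{v}$, and thus does not affect the overall $\bigoh{1}$ bound.
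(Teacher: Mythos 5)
Your proof is correct and follows the same route as the paper's (one-line) argument: compute the threshold $\lth{\varepsilon}{v}$ in constant time, index into the sorted array $\nao{v}$, and compare the stored \epsagree distance against $\varepsilon$ via Lemma \ref{lem:lightness:wrt:nao}. Your treatment of the boundary case where $\lth{\varepsilon}{v}$ exceeds $\card{\nao{v}}$ is a detail the paper omits but is handled correctly.
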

\begin{proof}
	This is implied by Lemma \ref{lem:lightness:wrt:nao}, assuming that the $\nao{v}$ is implemented as an array. 
\end{proof}
Given access to the index $\nao{G}=\set{(v,\nao{v})\vert v\in V}$ for a graph $G$, a query simply asks for a clustering of the graph for a given value of $\varepsilon$. 
\begin{thm}
	\label{thm:cluster:using:nao}
	Given access to the index $\nao{G}$, computing a clustering for a given parameter $\varepsilon$ can be accomplished in $\bigoh{m\left(a(n)+1\right)}$ amortized time, where $a(n)$ is the slowly growing inverse of the single-valued Ackermann's function. 
\end{thm}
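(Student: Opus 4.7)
The plan is to imitate Algorithm~\ref{alg:correlation:clustering} while replacing its two filtering passes by constant-time lookups into $\nao{G}$, and to assemble the surviving sparsified graph $\widetilde{G^+}$ directly into a Union--Find data structure whose final components are the output clusters. First I would walk through $V$ once and tag every vertex as \epslight or \epshigh in $\bigoh{1}$ per vertex via Lemma~\ref{lem:lightness:wrt:nao:time}, caching the bit in an array. Next, I would initialize $n$ singleton sets in a Union--Find with union-by-rank and path compression. Then, for each $v\in V$, I would scan the prefix of $\nao{v}$ whose stored \epsagree distance is strictly less than $\varepsilon$; because $\nao{v}$ is sorted by that distance (Definition~\ref{def:nao}), the scan terminates as soon as the threshold is exceeded, and every vertex encountered in the prefix is the far endpoint of an edge that survives the $\varepsilon$-agreement filter. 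For each such neighbor $u$, if at least one of $u,v$ is tagged \epshigh, I would perform Union$(u,v)$; otherwise the edge is discarded by the second filter of Algorithm~\ref{alg:correlation:clustering}. Finally, I would read off the clustering by mapping every vertex to its Union--Find representative.

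For correctness, the surviving edges form exactly $E(\widetilde{G^+})$ by the definitions of \epsagree and \epslight together with Lemma~\ref{lem:lightness:wrt:nao}, and the Union--Find then returns precisely the connected components of $\widetilde{G^+}$, which by Algorithm~\ref{alg:correlation:clustering} is the desired clustering.

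For the running-time analysis, the lightness tagging costs $\bigoh{n}$. Every element of every $\nao{v}$ is visited at most once during the scanning pass, and by Lemma~\ref{lem:nao:space:complexity} the total size of these arrays is $2m$, so this pass performs $\bigoh{m}$ probes and at most $2m$ Union operations. Since Union--Find with union-by-rank and path compression supports any sequence of $\bigoh{m}$ operations on a universe of $n$ elements in $\bigoh{m\,a(n)}$ amortized time, the overall cost is $\bigoh{n+m\,a(n)}=\bigoh{m(a(n)+1)}$ amortized.

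The main subtlety I expect is to justify that truncating the scan of $\nao{v}$ at the first out-of-range distance is safe and that no edge is overlooked. This hinges on the sorted invariant of Definition~\ref{def:nao} and on the explicit storage of the \epsagree distance alongside each neighbor, so no binary search or auxiliary index is needed; it also requires observing that every surviving edge $\{u,v\}$ is discovered exactly twice (once from each endpoint), which is harmless for Union--Find but matters for concluding that the total bookkeeping is linear in $m$ rather than $m\log m$.
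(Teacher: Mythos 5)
Your proposal is correct and follows essentially the same route as the paper's proof: decide \epslight{ness} in $\bigoh{1}$ per vertex via Lemmas \ref{lem:lightness:wrt:nao} and \ref{lem:lightness:wrt:nao:time}, scan each sorted $\nao{v}$ only up to the first entry whose stored distance reaches $\varepsilon$, and feed the surviving edges into a union--find structure, for $\bigoh{n+m\,a(n)}=\bigoh{m\left(a(n)+1\right)}$ amortized total. The one substantive difference is in your favor: the paper's proof only calls $\textsc{Union}(u,v)$ when \emph{both} endpoints lie in the heavy set $\mathcal{H}$, whereas Algorithm \ref{alg:correlation:clustering} discards an edge only when both endpoints are \epslight, so your condition ``at least one of $u,v$ is \epshigh'' is the one that actually reconstructs $\widetilde{G^+}$ and hence the stated output.
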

\begin{proof}
	Intuitively, we are going to use $\nao{G}$ to construct the graph $\widetilde{G^+}$ and compute its connected components incrementally. More formally, we start by putting each vertex to its isolated set in a disjoin-union structure. Then, we identify \epshigh vertices $\mathcal{H}\subseteq V$, which takes $\bigoh{m}$. Next, consider $\nao{v}$ for $v\in \mathcal{H}$. For each vertex $u\in\nao{v}$ whose key is smaller than $\varepsilon$ and $u\in\mathcal{H}$, we call $\textsc{Union}(u,v)$, which merges the cluster which $u$ and $v$ belong. These operations takes $\bigoh{m\times a(n)}$ amortized time \cite{clrs22-introduction-algorithms} using a Disjoint-Set data structure. 
	The correctness of the query algorithm is implied by the Definition \ref{def:nao} and the Lemmas \ref{lem:lightness:wrt:nao} and \ref{lem:lightness:wrt:nao:time}. 
\end{proof}

Before going any further, we need to state the following results, which intuitively investigate the behavior of the \epsagree and \epslight of edges and vertices through the $\mathcal{E}$.%
Let 
\begin{equation}
	\mathcal{E}=\set{\nonagreement{G^+}{u}{v}\vert\set{u,v}\in E^+}=\set{\varepsilon_0, \varepsilon_1, \ldots, \varepsilon_{\ell-1}}, 
\end{equation}
where $\varepsilon_{i-1} < \varepsilon_{i}$ for $i=1,\ldots,\ell-1$. Moreover, assume that $\varepsilon_\ell>\max\mathcal{E}$. 
\begin{obs}
	\label{obs:min:eps:values}
	For any $\varepsilon\leq\varepsilon_0$, the endpoints of all positive signed edges $\set{u,v}$ are not in $\varepsilon$-agreement. 
\end{obs}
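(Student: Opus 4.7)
The plan is to unfold the definitions and observe that the claim is immediate from the way $\varepsilon_0$ was selected. Fix an arbitrary positively signed edge $\{u,v\}\in E^+$. By the very definition of the set $\mathcal{E}$, its non-agreement value $\nonagreement{G^+}{u}{v}$ is an element of $\mathcal{E}$, so it is at least the minimum $\varepsilon_0=\min\mathcal{E}$.

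Next I would chain this with the hypothesis $\varepsilon\leq\varepsilon_0$ to conclude that $\nonagreement{G^+}{u}{v}\geq\varepsilon_0\geq\varepsilon$. Finally, recall that the \epsagree criterion from the preliminaries requires the strict inequality $\nonagreement{G^+}{u}{v}<\varepsilon$. Since we have just established the reverse (non-strict) inequality, the strict inequality fails and $u$ and $v$ are not in $\varepsilon$-agreement. Because $\{u,v\}$ was arbitrary in $E^+$, the conclusion holds for every positive edge.

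There is no real obstacle here: the statement is a one-line consequence of (i) the construction of $\mathcal{E}$ as the image of \textsc{NonAgreement} over $E^+$, (ii) the ordering $\varepsilon_0<\varepsilon_1<\cdots<\varepsilon_{\ell-1}$ that identifies $\varepsilon_0$ as the minimum, and (iii) the strictness of the inequality in the definition of \epsagree. The only minor subtlety worth mentioning is that the strictness of $<$ in the \epsagree definition is essential: if the criterion were $\leq$, the observation would fail exactly at $\varepsilon=\varepsilon_0$, since the edges realizing $\varepsilon_0$ would then qualify as being in \epsagree. This is why the result is stated for $\varepsilon\leq\varepsilon_0$ rather than only $\varepsilon<\varepsilon_0$.
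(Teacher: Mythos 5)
Your proof is correct and is exactly the definition-unfolding argument the paper leaves implicit for this observation: every positive edge's \textsc{NonAgreement} value lies in $\mathcal{E}$ and is therefore at least $\varepsilon_0\geq\varepsilon$, so the strict inequality defining \epsagree fails. Your added remark on why strictness matters at $\varepsilon=\varepsilon_0$ is a correct and worthwhile clarification.
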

\begin{obs}
	\label{obs:max:eps:values}
	For any $\varepsilon>\varepsilon_{\ell-1}$, the endpoints of all positive signed edges $\set{u,v}$ are in $\varepsilon$-agreement.
\end{obs}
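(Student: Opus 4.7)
The plan is simply to unfold the relevant definitions; the statement is a direct consequence of how $\mathcal{E}$ and $\varepsilon_{\ell-1}$ were set up, together with the definition of \epsagree from Section \ref{sec:preliminary}. First, I would recall that by construction
\[
\mathcal{E} = \set{\nonagreement{G^+}{u}{v}\,\vert\,\set{u,v}\in E^+},
\]
so for every positive edge $\set{u,v}\in E^+$ the quantity $\nonagreement{G^+}{u}{v}$ is itself an element of $\mathcal{E}$. Since $E^+$ is finite, $\mathcal{E}$ is a finite set of real numbers, the maximum $\varepsilon_{\ell-1}=\max\mathcal{E}$ is attained, and one immediately obtains
\[
\nonagreement{G^+}{u}{v} \le \varepsilon_{\ell-1}
\]
for every $\set{u,v}\in E^+$.

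Next I would combine this bound with the hypothesis $\varepsilon>\varepsilon_{\ell-1}$ of the observation to conclude that $\nonagreement{G^+}{u}{v}\le\varepsilon_{\ell-1}<\varepsilon$, i.e. $\nonagreement{G^+}{u}{v}<\varepsilon$. By the definition of \epsagree recalled in Section \ref{sec:preliminary}, this strict inequality is exactly the condition for the endpoints $u$ and $v$ to be in \epsagree, which is the required conclusion.

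There is no real obstacle here; the only small point worth checking explicitly in the write-up is that the maximum $\varepsilon_{\ell-1}=\max\mathcal{E}$ is actually attained (which is immediate from $|E^+|<\infty$), since this is what turns the universal bound $\nonagreement{G^+}{u}{v}\le\varepsilon_{\ell-1}$ into a strict inequality once the hypothesis $\varepsilon>\varepsilon_{\ell-1}$ is used. The argument for Observation \ref{obs:min:eps:values} is symmetric, so the same template would be reused there.
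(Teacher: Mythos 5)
Your proof is correct and is exactly the intended argument: since $\nonagreement{G^+}{u}{v}\in\mathcal{E}$ for every positive edge and $\varepsilon_{\ell-1}=\max\mathcal{E}$ is attained ($E^+$ being finite), the hypothesis $\varepsilon>\varepsilon_{\ell-1}$ gives $\nonagreement{G^+}{u}{v}<\varepsilon$, which is precisely the definition of \epsagree. The paper leaves this observation unproved because it is this immediate unfolding of the definitions, so there is nothing to add.
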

By the Observation \ref{obs:min:eps:values}, the correlation clustering output by the Algorithm \ref{alg:correlation:clustering} for all values of $\varepsilon\leq\varepsilon_0$ would be the collection of singleton vertices, i.e. $\set{\set{v}\vert v\in V}$. However, we cannot say that for $\varepsilon>\varepsilon_{ell-1}$, the output is a single cluster, i.e. the set $V$. Why is that? As $\varepsilon$ increases, the number of edges in \epsagree would increase, however, the number of $\varepsilon$-light vertices would increase, too (By Corollary \ref{cor:monotonicity:result}, which follows soon). Hence, there is a trade-off for choosing a suitable value of $\varepsilon$ to get the minimum number of clusters. We would discuss this issue further in Section \ref{sec:experiments} (Experiments). 
\begin{thm}
	\label{thm:monotonicity:agree}
	Let $\varepsilon < \varepsilon'$ and $u,v\in V$ be two distinct vertices. If $u$ and $v$ are in $\varepsilon$-agreement, then they would be in $\varepsilon'$-agreement, too. Also, if $u$ and $v$ are not in $\varepsilon'$-agreement, then they would not be in $\varepsilon$-agreement. 
\end{thm}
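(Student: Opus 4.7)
The plan is to reduce everything to the simple observation that $\nonagreement{G^+}{u}{v}$ is a quantity intrinsic to the pair $(u,v)$ and the graph $G^+$; it does not depend on the parameter $\varepsilon$ at all. Concretely, by definition it equals $\card{N_{G^+}(u)\Delta N_{G^+}(v)}/\max\set{\card{N_{G^+}(u)},\card{N_{G^+}(v)}}$, and the two sets and their cardinalities are functions of $G^+$ alone. The predicate ``$u$ and $v$ are in $\varepsilon$-agreement'' is then just the inequality $\nonagreement{G^+}{u}{v}<\varepsilon$ applied to this fixed real number.

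Once this is in place, both halves of the statement are immediate consequences of the transitivity of $<$. For the first half, I would assume $u$ and $v$ are in $\varepsilon$-agreement, so $\nonagreement{G^+}{u}{v}<\varepsilon$; combining with the hypothesis $\varepsilon<\varepsilon'$ yields $\nonagreement{G^+}{u}{v}<\varepsilon'$, which is exactly $\varepsilon'$-agreement. For the second half, I would proceed by contrapositive: if $u$ and $v$ are not in $\varepsilon'$-agreement then $\nonagreement{G^+}{u}{v}\geq\varepsilon'>\varepsilon$, so they are not in $\varepsilon$-agreement either. Alternatively, the second half can be phrased as the contrapositive of the first half applied with the roles of the two thresholds reversed.

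There is essentially no obstacle here; the lemma is a bookkeeping fact rather than a substantive claim. The only thing worth emphasizing carefully in the write-up is that $\nonagreement{G^+}{u}{v}$ is defined purely in terms of the positive-edge neighborhoods and therefore stays constant as $\varepsilon$ varies — the parameter $\varepsilon$ enters only through the comparison. With that point made explicit, both implications are one-line consequences of the ordering on $\R$.
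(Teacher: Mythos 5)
Your proposal is correct and follows essentially the same route as the paper's proof: both observe that $\nonagreement{G^+}{u}{v}$ is a fixed quantity independent of $\varepsilon$ and then apply transitivity of the order, chaining $\nonagreement{G^+}{u}{v}<\varepsilon<\varepsilon'$ for the first claim and $\nonagreement{G^+}{u}{v}\geq\varepsilon'>\varepsilon$ for the second. Your explicit remark that the parameter enters only through the comparison is a nice clarification but does not change the argument.
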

\begin{proof}
	The proof is a direct implication of the \textsc{NonAgreement} definition. Given that $u$ and $v$ are in $\varepsilon$-agreement, we have $\nonagreement{G^+}{u}{v}<\varepsilon<\varepsilon'$, i.e. $u$ and $v$ are in $\varepsilon'$-agreement, too. Similarly, given that $u$ and $v$ are not in $\varepsilon'$-agreement, we have $\nonagreement{G^+}{u}{v}\geq\varepsilon'>\varepsilon$, i.e. $u$ and $v$ are not in $\varepsilon$-agreement, too.
\end{proof}
\begin{thm}
	\label{thm:monotonicity:lightness}
	Let $\varepsilon < \varepsilon'$ and $u\in V$. If $u$ is $\varepsilon$-light, then $u$ would be $\varepsilon'$-light, too. Also, if $u$ is $\varepsilon'$-heavy, then it would be $\varepsilon$-heavy. 
\end{thm}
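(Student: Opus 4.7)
The plan is to mirror the short argument used for Theorem \ref{thm:monotonicity:agree} and lift it from pairwise $\varepsilon$-agreement to the aggregate count that defines $\varepsilon$-lightness. Since the two claims in the statement are logical contrapositives of one another, I would prove only the lightness direction and obtain the heaviness direction for free.

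First, I would make explicit the $\varepsilon$-dependence of $\agreecnt_{G^+}(u)$, since the set of $w\in V$ in $\varepsilon$-agreement with $u$ grows with $\varepsilon$. By Theorem \ref{thm:monotonicity:agree}, every $w$ in $\varepsilon$-agreement with $u$ remains in $\varepsilon'$-agreement with $u$ for any $\varepsilon<\varepsilon'$, so $\agreecnt_{G^+}(u)$ is monotone non-decreasing in the threshold. Then I would unfold the defining inequality $\agreecnt_{G^+}(u)/\card{N_{G^+}(u)}<\varepsilon$ and attempt to chain it with $\varepsilon<\varepsilon'$ to conclude that the ratio stays below $\varepsilon'$ at the new threshold.

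The delicate point, and the main obstacle, is that the numerator may jump when $\varepsilon$ is replaced by $\varepsilon'$ because new neighbors can enter agreement, so the na\"ive chain of inequalities is not sufficient on its own. The cleanest way to manage this is to reinterpret the light/heavy test through Lemma \ref{lem:lightness:wrt:nao} and the sorted structure $\nao{u}$: I would argue that the shift of the light-threshold index from $\lth{\varepsilon}{u}=\lceil \varepsilon\deg_{G^+}(u)\rceil+1$ to $\lth{\varepsilon'}{u}$ is compatible with the shift of the numerical threshold from $\varepsilon$ to $\varepsilon'$, so that whenever the entry at position $\lth{\varepsilon}{u}$ in $\nao{u}$ has \epsagree distance at least $\varepsilon$, the entry at position $\lth{\varepsilon'}{u}$ has distance at least $\varepsilon'$. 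The heaviness claim then follows as the logical contrapositive of the lightness claim and needs no separate argument.
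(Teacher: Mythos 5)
You correctly isolate the real obstacle---when $\varepsilon$ is raised to $\varepsilon'$, the count $\agreecnt_{G^+}(u)$ can jump because new neighbours enter agreement, so the naive chain $\agreecnt_{G^+}(u)/\card{N_{G^+}(u)}<\varepsilon<\varepsilon'$ establishes nothing about lightness at $\varepsilon'$---but your proposed repair does not close this gap, it restates it. The assertion that ``whenever the entry at position $\lth{\varepsilon}{u}$ of $\nao{u}$ has \epsagree distance at least $\varepsilon$, the entry at position $\lth{\varepsilon'}{u}$ has distance at least $\varepsilon'$'' is precisely the theorem translated into the language of Lemma \ref{lem:lightness:wrt:nao}, and it is unjustified: since $\nao{u}$ is sorted and $\lth{\varepsilon'}{u}\geq\lth{\varepsilon}{u}$, the later entry inherits only the bound $\geq\varepsilon$, not $\geq\varepsilon'$. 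Concretely, let $\deg_{G^+}(u)=10$ and let the sorted \epsagree distances in $\nao{u}$ be $0,\,0.1,\,0.1,\,0.55,\,\ldots,\,0.55$. At $\varepsilon=0.5$ the test position is $\lth{0.5}{u}=6$, whose entry $0.55\geq 0.5$ makes $u$ \epslight; at $\varepsilon'=0.6$ the test position is $7$, whose entry $0.55<0.6$ makes $u$ \epshigh. A light vertex has become heavy, so the step cannot be saved.

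In fact the statement itself is false as written, and the paper effectively concedes this: two paragraphs after the theorem it answers affirmatively the question of whether an $\varepsilon$-light vertex can become $\varepsilon'$-heavy, and Corollary \ref{cor:monotonicity:result} states that lightness is not monotone---both of which flatly contradict the theorem. The paper's own ``proof'' is a one-line appeal to the definition and to Theorem \ref{thm:monotonicity:agree}; that theorem only yields that $\agreecnt_{G^+}(u)$ is non-decreasing in $\varepsilon$, while the comparison threshold $\varepsilon\card{N_{G^+}(u)}$ grows as well, and neither side of that race wins in general. Your diagnosis that ``the naive chain of inequalities is not sufficient'' was the correct final verdict; the honest conclusion is that the claimed monotonicity fails, not that a cleverer reading of $\nao{u}$ rescues it.
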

\begin{proof}
	This is implied by the definition of $\varepsilon$-lightness and Theorem \ref{thm:monotonicity:agree}. 
\end{proof}
Two other important cases are not considered in Theorem \ref{thm:monotonicity:lightness}, i.e. (1) Is it possible that a vertex $u$ be $\varepsilon$-heavy, but becomes $\varepsilon'$-light for some values $\varepsilon<\varepsilon'$?, and (2) Is it possible that a $\varepsilon$-light vertex becomes $\varepsilon'$-heavy for some values of $\varepsilon<\varepsilon'$? The answer to both of these questions is affirmative. %
By Theorems \ref{thm:monotonicity:agree} and \ref{thm:monotonicity:lightness} and the previous discussion, we can state the following corollary. 
\begin{coro}
	\label{cor:monotonicity:result}
	Let $\varepsilon < \varepsilon'$. 
	\begin{enumerate}
		\item The number of positively signed edges whose endpoints are in $\varepsilon'$-agreement is greater than or equal to the number of positive edges whose endpoints are in $\varepsilon$-agreement. In other words, the agreement relation is monotone.
		\item The number of vertices which are $\varepsilon'$-light can be either greater or less than or even equal to the number of $\varepsilon$-light vertices. In other words, the lightness relation is not necessarily monotone.
	\end{enumerate}
\end{coro}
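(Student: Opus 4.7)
The plan is to treat the two clauses of the corollary separately.

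For clause 1, I would give a one-line lift of Theorem \ref{thm:monotonicity:agree} from pointwise to aggregate. Let $A_\varepsilon := \{\{u,v\} \in E^+ : u \text{ and } v \text{ are in } \varepsilon\text{-agreement}\}$. Theorem \ref{thm:monotonicity:agree} says every $\{u,v\} \in A_\varepsilon$ remains in $A_{\varepsilon'}$ whenever $\varepsilon < \varepsilon'$, so $A_\varepsilon \subseteq A_{\varepsilon'}$ and hence $|A_\varepsilon| \le |A_{\varepsilon'}|$, which is exactly the aggregate monotonicity of the agreement relation.

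For clause 2, the plan is to back up the preceding paragraph's affirmation that both pathological transitions (an $\varepsilon$-heavy vertex becoming $\varepsilon'$-light, and an $\varepsilon$-light vertex becoming $\varepsilon'$-heavy) are possible by exhibiting explicit witnesses. I would give three short graph examples: (a) one in which enlarging $\varepsilon$ to $\varepsilon'$ strictly increases the number of $\varepsilon$-light vertices, (b) one in which it strictly decreases the count, and (c) one in which it leaves the count unchanged. Since all three of ``greater than,'' ``less than,'' and ``equal to'' are then realized, the non-monotonicity of the light-vertex count follows. Example (c) is essentially trivial (any graph whose set of light vertices is stable on the chosen interval works), so the real work is in (a) and (b).

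The key idea for (b) is that both $\agreecnt_{G^+}(u)$ and the threshold $\varepsilon|N_{G^+}(u)|$ grow with $\varepsilon$, so I must arrange the numerator to leap past the threshold at a chosen crossing value. This is done by clustering several \textsc{NonAgreement} distances of $u$'s neighbors just above the target $\varepsilon$: a small enlargement to $\varepsilon'$ admits many new agreement neighbors simultaneously and forces the ratio $\agreecnt_{G^+}(u)/|N_{G^+}(u)|$ past $\varepsilon'$, flipping $u$ from light to heavy. Example (a) uses the dual scheme, arranging $\agreecnt_{G^+}(u)$ to remain constant on $[\varepsilon,\varepsilon']$ (by leaving a gap in the spectrum of $u$'s neighbor \textsc{NonAgreement} values) while the threshold $\varepsilon\cdot|N_{G^+}(u)|$ slides past it, turning $u$ from heavy to light. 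The main obstacle throughout is tuning these neighbor non-agreement distances with enough precision that the transitions happen at the prescribed values of $\varepsilon$; once the arithmetic is in place, combining clause 1 with the three witnesses yields the corollary.
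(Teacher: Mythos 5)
Your clause 1 is exactly the paper's argument: the paper offers no separate proof of the corollary, merely the remark that it follows ``by Theorems \ref{thm:monotonicity:agree} and \ref{thm:monotonicity:lightness} and the previous discussion,'' and the intended content of that remark for clause 1 is precisely your observation that pointwise monotonicity of agreement gives the inclusion $A_\varepsilon \subseteq A_{\varepsilon'}$ and hence the inequality of cardinalities. For clause 2 you go further than the paper does: the paper simply asserts, without witnesses, that a vertex can cross from \epshigh to \epslight and from \epslight to \epshigh as $\varepsilon$ grows, whereas you propose to actually construct the three witness graphs. Your identification of the mechanism is the right one --- the numerator $\agreecnt_{G^+}(u)$ and the threshold $\varepsilon\cdot\card{N_{G^+}(u)}$ both grow with $\varepsilon$, and one engineers the spectrum of \textsc{NonAgreement} values of $u$'s neighbours so that either quantity overtakes the other at a prescribed crossing point. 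The constructions themselves are still owed, but the plan is sound and, if carried out, would make the corollary's second clause genuinely proved rather than merely asserted.

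One thing you should be aware of, since it affects your case (b): a light-to-heavy transition (many neighbours' \textsc{NonAgreement} values clustered just above $\varepsilon$, so that $\agreecnt_{G^+}(u)$ leaps past $\varepsilon'\card{N_{G^+}(u)}$) flatly contradicts the literal statement of Theorem \ref{thm:monotonicity:lightness}, which claims that an $\varepsilon$-light vertex remains $\varepsilon'$-light for all $\varepsilon<\varepsilon'$. If that theorem were true, the set of \epslight vertices would be non-decreasing in $\varepsilon$ and the count could never drop, so clause 2 of the corollary would be false in the ``less than'' direction. The paper's own subsequent discussion (both transitions are declared possible) and its experimental curves (the number of \epslight vertices visibly decreases on an interval) side with you, so the defect is in Theorem \ref{thm:monotonicity:lightness}, not in your construction; but you should flag the inconsistency rather than silently build an example that refutes a theorem you are nominally allowed to cite.
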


The \textsc{Baseline} idea is to recompute the \textsc{NonAgreement} for each new value of $\varepsilon$, which takes $\bigoh{m\times \alpha(G)}$, as is described in the proof of Lemma \ref{lem:nao:space:complexity}, deciding on the $\varepsilon$-heaviness of the vertices in $G$ in time $\bigoh{n}$ and computing the connected components of the graph $\widetilde{G^+}$ as the output in time $\bigoh{m+n}$. Totally, the time complexity of the \textsc{Baseline} is $\bigoh{m\times\left(2+\alpha(G)\right)+n}$. 

To conclude, using the index structure we invest $\bigoh{m}$ space (Lemma \ref{lem:nao:space:complexity}) and $\bigoh{m\times\left(\alpha(G) + \lg m\right)}$ (Lemma \ref{lem:nao:construction:time}) time to construct the $\nao{G}$ which make it possible to answer each query with varying $\varepsilon$ values in time $\bigoh{m(a(n)+1)}$ (Theorem \ref{thm:cluster:using:nao}). Comparing this with $\bigoh{m\times\left(2+\alpha(G)\right)+n}$ time for the \textsc{Baseline} reveals that our index-based structure makes query times faster for variable values of $\varepsilon$. 

Given access to \textsc{NAO}, the algorithm \ref{alg:correlation:clustering:nao} constructs the graph $\widetilde{G^+}$. Note that the output of the algorithms \ref{alg:correlation:clustering} and \ref{alg:correlation:clustering:nao} are the same. As answering whether a vertex $v$ is \epshigh or the endpoints of an edge $\set{u,v}$ are in \epsagree can be answered in $\bigoh{1}$ time, the running-time of the algorithm \ref{alg:correlation:clustering:nao} would be $\bigoh{\max\set{|V|,|E^+|}}$ for the \textbf{for} loop and $\bigoh{|V_{\widetilde{G^+}}|+|E^+_{\widetilde{G^+}}|}$ to compute the connected components of $\widetilde{G^+}$. Totally, the running-time of the query algorithm in the ICC would be $\bigoh{m+n}$, compared to the $\bigoh{m\times\left(2+\alpha(G)\right)+n}$ time for the CC. 

To model queries over a graph for various values of $\varepsilon$, we use a \epsschedule defined as
\begin{equation}
	\epssched=\set{\varepsilon_0 < \varepsilon_1 < \cdots < \varepsilon_\ell} \subseteq [0,2]. 
\end{equation}
Note that we consider $\epssched$ as a set of strictly increasing real numbers, just for the sake of simplicity in notation. In a real life scenario, any time one can ask for a clustering with any value of $\varepsilon$. 

\begin{algorithm}[h]
	\caption{The index-based correlation clustering algorithm with $\nao{G}$ structure.}%
	\label{alg:correlation:clustering:nao}
	\begin{algorithmic}[1]
		\Procedure{IndexCorrelationClustering}{$G,\nao{G},\varepsilon$}
			\ForAll{$v\in V$}
				\State Use the $\nao(v)$ to identify and remove all the edges $\set{v,u}$ which are in $\varepsilon$-\textsc{NonAgreement}
				\State Use the $\nao{v}$ to identify and remove edges $\set{u,v}$ where $u$ and $v$ are both $\varepsilon$-light
			\EndFor
			\State \Return The connected components of the remaining graph as the clustering output.
		\EndProcedure
	\end{algorithmic}
\end{algorithm}

\subsection{Maintaining the index}\label{sec:index:maintenance}
The index structure introduced in Section \ref{sec:indexing:structure} is used to compute a clustering of a static graph for user-defined dynamic values of $\varepsilon$. In this section, we revise this structure to make it suitable for computing a clustering of a dynamic graph for dynamic values of $\varepsilon$. 

There are three different operations applicable to the underlying graph of the \textsc{CorrelationClustering}: (1) flipping the sign of an edge $e=\set{u,v}$, (2) adding a new vertex $v$, and (3) removing an existing vertex $v$. These operations are considered in Lemmas \ref{lem:edge:sign:flip:update:nao}, \ref{lem:add:vertex:update:nao}, and \ref{lem:remove:vertex:batch}, respectively. 

Shakiba in \cite{shakiba2022online} has shown that flipping the sign of an edge $\set{u,v}$, the \epsagree of edges whose both endpoints are not in the union of the positive neighborhood of its endpoints does not change (Propositions 2 and 4 in \cite{shakiba2022online}). Therefore, we just need to compute the \epsagree for positive edges $\set{x,w}$ where $x,w\in \left( N_{G^+}(u) \cup N_{G^+}(v) \right)$.

\begin{lem}
	\label{lem:edge:sign:flip:update:nao}
	Assuming the sign of an edge $e=\set{u,v}$ is flipped.
	\begin{enumerate}
		\item Assuming the sign of edge was $+$ prior to the flipping, then $u$ and $v$ would be no more in $\varepsilon$-agreement since their edge is now negatively signed. The vertices $v$ and $u$ are removed from the arrays $\nao{u}$ and $\nao{v}$, respectively. Moreover, we need to update the values of the non-agreement for vertices $u$ and $v$ in $\nao{w}$ for all vertices $w \in \nao{u} \cup \nao{v}$. 
		\item Assuming the sign of edge was $-$ prior to the flipping, then their $\nonagreement{G^+}{u}{v}$ is computed and the vertices $v$ and $u$ are added to proper sorted place in $\nao{u}$ and $\nao{v}$, respectively. Moreover, we need to recompute the $\nonagreement{G^+}{x}{w}$ for all $x,w\in\nao{u}\cup\nao{v}$ and update their \epsagree values. 
	\end{enumerate}
	Applying these changes is applicable in $\bigoh{\sum_{\set{x,w}\in E^+ \cap X \times X}\left(\log\deg_{G^+}(x)+\log\deg_{G^+}(w)\right)}$ where $X=\nao{u}\cup\nao{v}$. 
\end{lem}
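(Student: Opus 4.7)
The plan is to first invoke Theorem~\ref{thm:locality:change:shakiba} to restrict the set of index entries that can possibly be affected, and then bound the cost of each elementary update. Flipping the sign of $\{u,v\}$ can only alter $\nonagreement{G^+}{x}{w}$ when some endpoint of $\{x,w\}$ is within distance at most two of $u$ or $v$; restricted to entries stored in the index, this means only the sub-index on positive edges with both endpoints in $X=\nao{u}\cup\nao{v}\cup\{u,v\}$ needs attention. A sharper observation that I would exploit is that for every $y\notin\{u,v\}$ the positive neighborhood $N_{G^+}(y)$ is untouched by the flip, so the values that truly change are precisely those of the form $\nonagreement{G^+}{u}{\cdot}$ and $\nonagreement{G^+}{v}{\cdot}$. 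This pins down exactly which entries of $\nao{G}$ must be revised.

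I would then handle the two cases uniformly. In Case~1 (the sign was $+$) the edge becomes negative, so $u$ is deleted from $\nao{v}$ and $v$ from $\nao{u}$; both arrays shrink by one. For every $w\in X\setminus\{u,v\}$ with $\{u,w\}\in E^+$, the stored value of $u$ inside $\nao{w}$ and of $w$ inside $\nao{u}$ is refreshed to the new $\nonagreement{G^+}{u}{w}$ and repositioned to restore the sorted order; the analogous work is done for $v$. Case~2 (the sign was $-$) is dual: $\nonagreement{G^+}{u}{v}$ is computed from scratch, $u$ and $v$ are inserted at the correct rank into each other's sorted arrays, and the entries for $u,v$ are refreshed in every relevant $\nao{w}$ as above.

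The complexity bound splits into a value-update cost and a repositioning cost per affected edge. Each affected $\nonagreement{G^+}{u}{w}$ can be refreshed in $\bigoh{1}$ via an incremental formula, since $|N_{G^+}(u)|$ shifts by exactly $\pm 1$ and $|N_{G^+}(u)\Delta N_{G^+}(w)|$ shifts by $\pm 1$ according to whether $v\in N_{G^+}(w)$, so the new value is derived from the stored one in constant time. Deleting the old entry from $\nao{x}$ and inserting it at the new rank then costs $\bigoh{\log \deg_{G^+}(x)}$ by binary search. Summing $\bigoh{\log\deg_{G^+}(x)+\log\deg_{G^+}(w)}$ over every $\{x,w\}\in E^+\cap X\times X$ whose entry is touched yields the stated bound.

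The hardest part is really bookkeeping: the case splits of the incremental formula must be done carefully depending on whether $w$ is a common positive neighbor of $u$ and $v$ or a neighbor of only one of them, and the implicit data-structure assumption is that each $\nao{x}$ is backed by something supporting $\bigoh{\log}$-time insertion and deletion (for example a balanced ordered dictionary or skip list), since a plain sorted array would incur $\bigoh{\deg_{G^+}(x)}$ shifts per update.
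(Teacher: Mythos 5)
Your proof is correct and reaches the stated bound, but it takes a genuinely different and in two respects sharper route than the paper. First, the paper (following Shakiba's Propositions 2 and 4) only uses the conservative locality statement that edges with both endpoints outside $N_{G^+}(u)\cup N_{G^+}(v)$ are unaffected, and accordingly budgets for recomputing $\nonagreement{G^+}{x}{w}$ for \emph{every} positive pair $x,w\in X$; your observation that $\nonagreement{G^+}{x}{w}$ depends only on $N_{G^+}(x)$ and $N_{G^+}(w)$, which are unchanged unless $x$ or $v$ is an endpoint of the flipped edge, pins the changed values down to exactly those of the form $\nonagreement{G^+}{u}{\cdot}$ and $\nonagreement{G^+}{v}{\cdot}$ and makes the update set provably smaller (the stated sum over $E^+\cap X\times X$ then stands as a safe overestimate). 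Second, the paper's proof recomputes each affected value from scratch via the intersection, at cost $\bigoh{\min\set{\deg_{G^+}(x),\deg_{G^+}(w)}}$ per edge --- a term that does not actually appear in the final displayed bound --- whereas your $\pm 1$ incremental formula (conditioned on whether $v\in N_{G^+}(w)$, a membership test absorbed by the $\log$ terms) updates each value in essentially constant time and therefore \emph{justifies} the log-only bound that the paper asserts. Your closing remark about needing a balanced ordered dictionary rather than a plain sorted array is also a fair correction of the paper's claim that repositioning takes $\bigoh{1}$ amortized time in an array. One small caveat: after the flip, every entry of $\nao{u}$ (and of $\nao{v}$) changes value simultaneously, so you should note explicitly that restoring sorted order there is a re-sort of the whole array costing $\bigoh{\deg_{G^+}(u)\log\deg_{G^+}(u)}$, which is still within the claimed sum; with that noted, the argument is complete.
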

\begin{proof}
	The correctness of this lemma follows from the discussion just before it. Therefore, we would just give the time analysis. In the first case, finding the vertices in each $\mathbb{NAO}$ is of $\bigoh{\log \deg_{G^+}(u)+\log \deg_{G^+}(v)}$. The removal would be $\bigoh{1}$ amortized time. 
	
	The second case, i.e. flipping the sign from $-$ to $+$, would cost more, as it may require changing all the positive edges with both endpoints the in the set $X$. In the worst case, one may assume that all the \epsagree between the edges with both endpoints inside $X$ is changed. Therefore, we need to update each of them on their corresponding $\mathbb{NAO}$ indices. We need to add vertices $v$ and $u$ to the $\mathbb{NAO}(u)$ and $\mathbb{NAO}(v)$, respectively, based on $\nonagreement{G^+}{u}{v}$. Finding their correct position inside sorted $\mathbb{NAO}$s is possible with $\bigoh{\log \deg_{G^+}(u)+\log \deg_{G^+}(v)}$ comparisons. Again, adding them at their corresponding position would take $\bigoh{1}$ amortized time. For the other positive edges with both endpoints in the set $X$, such as $\set{x,w}$, we might need to update their \epsagree. Without loss of generality, assume that $\deg_{G^+}(x) \leq \deg_{G^+}(w)$. Doing so requires re-computation of $\nonagreement{G^+}{x}{w}$, in $\bigoh{\min\set{\deg_{G^+}(x),\deg_{G^+}(w)}}$, querying the \textsc{NonAgreement} inside $\nao{x}$, in time $\bigoh{\log\deg_{G^+}(x)}$, and then possible updating its position, in both $\nao{x}$ and $\nao{w}$ requires $\bigoh{\log\deg_{G^+}(x)+\log\deg_{G^+}(w)}$. In the worst case, all the elements in $X$ may need update. Therefore, this case can be accomplished in $$\bigoh{\sum_{\set{x,w}\in E^+ \cap X \times X}\left(\log\deg_{G^+}(x)+\log\deg_{G^+}(w)\right)},$$ in the worst-case. 
\end{proof}
Using Lemma \ref{lem:edge:sign:flip:update:nao}, we can state the $\textsc{NAO-FlipEdge}(u,v)$ algorithm (Algorithm \ref{alg:edge:sign:flip:update:nao}) which flips the sign of the edge $\set{u,v}$. The correctness and the running-time of this algorithm follows directly from Lemma \ref{lem:edge:sign:flip:update:nao}. 
\begin{algorithm}[h]
	\caption{The $\textsc{NAO-FlipEdge}(u,v)$ algorithm to apply the effect of flipping the sign of the edge $(u,v)$ in $\nao{G}$.}
	\label{alg:edge:sign:flip:update:nao}
	\begin{algorithmic}[1]
		\Procedure{NAO-FlipEdge}{$u,v$}
			\State Let $\mathcal{A} \gets \nao{u}\cup\nao{v}$.
			\If{Sign of edge $\set{u,v}$ is positive}
				\State Update the graph $G^+$ by removing edge $\set{u,v}$.
				\State Remove the vertices $u$ and $v$ from $\nao{v}$ and $\nao{u}$, respectively.
			\Else \Comment{Sign of edge $\set{u,v}$ is negative}
				\State Update the graph $G^+$ by adding edge $\set{u,v}$.				
				\State Compute $\nonagreement{G^+}{u}{v}$.
				\State Add the vertices $u$ and $v$ to $\nao{v}$ and $\nao{u}$, respectively. %
			\EndIf
			\ForAll{$w\in\mathcal{A}$}
				\State Recompute $\nonagreement{G^+}{u}{w}$ and update $\nao{u}$ and $\nao{w}$.
				\State Recompute $\nonagreement{G^+}{v}{w}$ and update $\nao{v}$ and $\nao{w}$.
			\EndFor
		\EndProcedure
	\end{algorithmic}
\end{algorithm}
There are cases where a set of positive edges $E^+_v$ are also given for a newly added vertex $v$. One can first add the vertex with all negative edges, and afterwards, flip all the edges in $E^+_v$, so they would become positively signed. However, a batch operation would give us higher performance in practice. 
\begin{lem}
	\label{lem:add:vertex:update:nao}
	Assume that a vertex $v$ is added to graph $G$ with new positive signed edges $E^+_v$. Then,
	\begin{enumerate}
		\item If $E^+_v = \emptyset$, then $\nao{G}$ does not need any updates and we just add a new $\nao{v}$ to $\nao{G}$ with $v$ as its only element in constant-time.
		\item Otherwise, let $X=N_{G^+}(v) \cup \left(\cup_{x\in N_{G^+}(v)} N_{G^+}(x)\right)$. We need to compute the $\nonagreement{G^+}{x}{w}$ for each positive edges $\set{x,w}$ with $x,w\in X$ after adding all the new positive edges in $E^+_v$, and possibly update $\nao{x}$ and $\nao{w}$. This would require 
		\begin{equation}
			\bigoh{\sum_{\set{x,w}\in (E^+ \cup E^+_v) \cap X \times X}\left(\log\deg_{G^+}(x)+\log\deg_{G^+}(w)\right)},
		\end{equation}
		operations in the worst case.
	\end{enumerate}
\end{lem}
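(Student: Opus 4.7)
The plan is to reduce the batched insertion to a controlled sequence of edge-sign flips and then invoke the edge-flip analysis of Lemma \ref{lem:edge:sign:flip:update:nao}, with the locality result of Theorem \ref{thm:locality:change:shakiba} pinning down the set of adjacency lists that can possibly need attention. Case (1), where $E^+_v = \emptyset$, is immediate: the new vertex $v$ is isolated in $G^+$, so $\nao{v}$ contains only $v$ itself, no existing adjacency list is disturbed, and no \textsc{NonAgreement} value anywhere else changes because $v$ is not a positive neighbor of any other vertex.

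For case (2), I would view the batch as first adding $v$ with no edges (free, by case (1)) and then conceptually flipping each edge in $E^+_v$ from $-$ to $+$, one at a time. By Theorem \ref{thm:locality:change:shakiba}, flipping the sign of $\set{v,x}$ can only perturb the \textsc{NonAgreement} of vertex pairs at distance at most two from either endpoint; taking the union over $x \in N_{G^+}(v)$ and using that every such $x$ is a positive neighbor of $v$, this two-neighborhood is contained in $X = N_{G^+}(v) \cup \bigcup_{x\in N_{G^+}(v)} N_{G^+}(x)$. Consequently only positive edges $\set{x,w}$ with both endpoints in $X$ can acquire a new value, and only $\nao{x}$ and $\nao{w}$ for $x,w \in X$ might need updating.

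For each such affected edge $\set{x,w}$ I would recompute $\nonagreement{G^+}{x}{w}$ in the post-update $G^+$, locate the obsolete entry by binary search in each of $\nao{x}$ and $\nao{w}$, and splice in the new key at its correct sorted position. As in the proof of Lemma \ref{lem:edge:sign:flip:update:nao}, locating and reinserting costs $\bigoh{\log\deg_{G^+}(x)+\log\deg_{G^+}(w)}$ per edge, and summing over all edges in $(E^+ \cup E^+_v) \cap X\times X$ yields the claimed bound. The newly activated pairs from $E^+_v$ itself, namely the edges $\set{v,x}$ for $x\in N_{G^+}(v)$, are inserted into $\nao{v}$ and the corresponding $\nao{x}$ by the same mechanism, which is why the sum ranges over $E^+ \cup E^+_v$ rather than only $E^+$.

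The main obstacle I expect is justifying the localization claim carefully in the batch setting: Theorem \ref{thm:locality:change:shakiba} is phrased for a single flip, so sequencing it forces me to track how the distance-two neighborhoods evolve as intermediate flips introduce more positive edges. The cleanest way to sidestep this is to observe that every intermediate $G^+$ has a positive neighborhood of $v$ that is contained in the final $N_{G^+}(v)$, so the distance-two ball around $\set{v,x_i}$ at any intermediate step stays inside the set $X$ defined in the statement. Once this containment is fixed, the remainder is purely the bookkeeping from the edge-flip lemma applied edge-by-edge, with the cost of recomputing any single $\nonagreement{G^+}{x}{w}$ absorbed into the same per-edge logarithmic term exactly as in Lemma \ref{lem:edge:sign:flip:update:nao}.
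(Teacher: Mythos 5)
Your overall strategy --- add $v$ isolated, realize $E^+_v$ as a sequence of $-$ to $+$ flips, and reuse the per-edge accounting of Lemma~\ref{lem:edge:sign:flip:update:nao} --- is exactly the paper's route; the paper's own proof is essentially a one-line deferral to that lemma, and your case (1) and your cost bookkeeping match it.

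There is, however, a flaw in the one step you yourself flag as the main obstacle: the localization. You invoke Theorem~\ref{thm:locality:change:shakiba} and assert that the distance-two ball around $\set{v,x_i}$ is contained in $X$. That containment is false in general: $x_i$ lies at distance one from $v$, so the distance-two ball around $x_i$ reaches vertices at distance three from $v$, whereas $X=N_{G^+}(v)\cup\bigl(\cup_{x\in N_{G^+}(v)}N_{G^+}(x)\bigr)$ only contains vertices at distance at most two from $v$. Moreover, Theorem~\ref{thm:locality:change:shakiba} is the wrong tool here --- it bounds which \emph{vertices} can change their non-agreement or lightness status, and is deliberately loose enough to cover lightness. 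The statement you actually need is the edge-level locality result the paper quotes immediately before Lemma~\ref{lem:edge:sign:flip:update:nao} (Propositions 2 and 4 of the cited work): flipping $\set{u,v}$ can change $\nonagreement{G^+}{x}{w}$ only when both $x$ and $w$ lie in $N_{G^+}(u)\cup N_{G^+}(v)$. (Indeed, the flip only alters the adjacency lists of $u$ and $v$, so only edges incident to $u$ or $v$ see a new value.) Applying that to each flip $\set{v,x_i}$ gives affected endpoints inside $N_{G^+}(v)\cup N_{G^+}(x_i)\subseteq X$, and your observation that intermediate positive neighborhoods only grow into the final ones then correctly closes the batch argument. With that substitution the proof is sound and coincides with the paper's.
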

\begin{proof}
	The correctness would follow immediately from the Lemma \ref{lem:edge:sign:flip:update:nao}. Again, the first case can be accomplished in $\bigoh{1}$ time. For the second case, we need to calculate $\card{X}$ values of \textsc{NonAgreement}, and add them or update them in their corresponding $\mathbb{NAO}$s. This would require 
	\begin{equation}
		\bigoh{\sum_{\set{x,w}\in (E^+ \cup E^+_v) \cap X \times X}\left(\log\deg_{G^+}(x)+\log\deg_{G^+}(w)\right)}, 
	\end{equation}
	by exactly the same discussion as in Lemma \ref{lem:edge:sign:flip:update:nao}. 
\end{proof}
\begin{rem}
	In comparing the time required for Lemmas \ref{lem:edge:sign:flip:update:nao} and \ref{lem:add:vertex:update:nao}, please note that the size of the set $X$ can be much larger in Lemma \ref{lem:add:vertex:update:nao} than the one in Lemma \ref{lem:edge:sign:flip:update:nao}, depending on the size of $E^+_v$ for the new vertex $v$. 
\end{rem}
The $\textsc{NAO-AddVertex}(x,N_x)$ algorithm (Algorithm \ref{alg:add:vertex}) adds a new vertex $x$ to $G^+$ and all its neighboring positively signed edges $N_x$, and updates the $\nao{G}$. The correctness and the running-time of this algorithm follows directly from Lemma \ref{lem:add:vertex:update:nao}. 
\begin{algorithm}[h]
	\caption{The $\textsc{NAO-AddVertex}(x,N_x)$ algorithm to add a new vertex $x$ and all its positive neighbors $N_x$ to $\nao{G}$.}
	\label{alg:add:vertex}
	\begin{algorithmic}[1]
		\Procedure{NAO-AddVertex}{$x,N_x$}
			\State Update graph $G^+$ by adding the new vertex $x$.
			\State Construct a new $\nao{x}$ and add it to $\nao{G}$.
			\State Let $X \gets N_x \cup \left(\cup_{z\in N_x} N_{G^+}(z)\right)$.
			\ForAll{$y \in X$}
				\ForAll{$z \in X\setminus y$}
					\State Update $\nao{y}$ and $\nao{z}$ if the $\nonagreement{G^+}{y}{z}$ changes.
				\EndFor
			\EndFor
		\EndProcedure
	\end{algorithmic}
\end{algorithm}
For the last operation, removing an existing vertex $v$ with a set of positive adjacent edges $E^+_v$, can be accomplished by first flipping the sign of all of its adjacent edges in $E^+_v$ from $+$ to $-$, and afterwards, removing its $\nao{v}$ from $\nao{G}$. Similar to vertex addition, we can do it also in batch-mode, hoping for better performance in practice. The algorithm for the $\textsc{NAO-RemoveVertex}(x)$ is similar to the $\textsc{NAO-AddVertex}(x,N_x)$ algorithm (Algorithm \ref{alg:add:vertex}).
\begin{lem}
	\label{lem:remove:vertex:batch}
		Assume that an existing vertex $v$ is removed from the graph $G$ with the set of adjacent positive signed edges $E^+_v$. Then,
	\begin{enumerate}
		\item If $E^+_v = \emptyset$, then $\nao{v}$ has a single element, itself, and can be easily removed from $\nao{G}$. Nothing else would require a change, so it is of constant-time complexity.
		\item Otherwise, let $X=N_{G^+}(v) \cup \left(\cup_{x\in N_{G^+}(v)} N_{G^+}(x)\right)$. We need to compute the $\nonagreement{G^+}{x}{w}$ for each positive edges $\set{x,w}$ with $x,w\in X$ after removing all the edges in $E^+_v$, and possibly update $\nao{x}$ and $\nao{w}$. In the worst-case, this would require 
		\begin{equation}
			\bigoh{\sum_{\set{x,w}\in (E^+ \cup E^+_v) \cap X \times X}\left(\log\deg_{G^+}(x)+\log\deg_{G^+}(w)\right)},
		\end{equation}
		operations.
	\end{enumerate}
\end{lem}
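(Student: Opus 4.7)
The plan is to mirror the proof of Lemma~\ref{lem:add:vertex:update:nao} with the roles of addition and deletion reversed. I would first dispense with the trivial case $E^+_v=\emptyset$: since $v$ then has no positive neighbors, it appears only inside its own $\nao{v}$, and by Theorem~\ref{thm:locality:change:shakiba} no other vertex's \textsc{NonAgreement} (nor its $\varepsilon$-lightness) is affected by the removal. So one simply drops $v$ from $G^+$ and deletes $\nao{v}$ from $\nao{G}$ in $\bigoh{1}$.

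For the main case $E^+_v\neq\emptyset$, my plan is to appeal to Theorem~\ref{thm:locality:change:shakiba} again---viewing the removal of $v$ as the composite of flipping every edge of $E^+_v$ from $+$ to $-$---to conclude that the only \textsc{NonAgreement} values which can possibly change are those of positive edges $\set{x,w}$ with both endpoints inside $X=N_{G^+}(v)\cup\left(\cup_{x\in N_{G^+}(v)}N_{G^+}(x)\right)$, the two-hop positive neighborhood of $v$. I would then (i) delete $v$ from every $\nao{w}$ with $w\in N_{G^+}(v)$ and erase $\nao{v}$, and (ii) recompute $\nonagreement{G^+}{x}{w}$ for each surviving positive edge $\set{x,w}\in(E^+\cup E^+_v)\cap X\times X$ via the intersection formula used in Lemma~\ref{lem:nao:construction:time}, repositioning the entries in $\nao{x}$ and $\nao{w}$ accordingly.

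The time analysis is then identical to the one in Lemma~\ref{lem:edge:sign:flip:update:nao}: for each affected edge $\set{x,w}$, recomputation costs $\bigoh{\min\set{\deg_{G^+}(x),\deg_{G^+}(w)}}$ and is absorbed by the binary-search plus amortized-constant shift inside the two sorted arrays, costing $\bigoh{\log\deg_{G^+}(x)+\log\deg_{G^+}(w)}$. Summing over $\set{x,w}\in(E^+\cup E^+_v)\cap X\times X$ yields the stated bound. The main subtlety I anticipate is justifying that edges with at least one endpoint outside $X$ need no update at all; this is precisely what the two-hop locality guarantee of Theorem~\ref{thm:locality:change:shakiba} supplies when it is applied to each flipped edge of $E^+_v$ in turn. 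A minor point to settle is that the degrees in the bound should be interpreted just before the removal is committed, so that the sum majorises the post-removal work as well.
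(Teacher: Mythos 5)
Your proposal is correct and follows essentially the same route as the paper, which likewise treats vertex removal as the batch of edge-sign flips from $+$ to $-$ (so that correctness reduces to Lemma~\ref{lem:edge:sign:flip:update:nao}) and reuses verbatim the running-time analysis of Lemma~\ref{lem:add:vertex:update:nao}. Your added remarks on the two-hop locality guarantee and on evaluating degrees before the removal is committed only make explicit what the paper leaves implicit.
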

\begin{proof}
	The correctness of this lemma is implied by the Lemma \ref{lem:edge:sign:flip:update:nao}. The discussion of the running-time is exactly the same as in the proof of Lemma \ref{lem:add:vertex:update:nao}. 
\end{proof}

	\section{Experiments}\label{sec:experiments}
		To evaluate the proposed method, we used 7 graphs which are formed by user-user interactions. These datasets are described in Table \ref{tbl:datasets} and are accessible through the SNAP \cite{SNAP}. The smallest dataset consists of 22\,470 nodes with 171\,002 edges and the largest one consists of 317\,080 nodes with 1\,049\,866 edges. In all these graphs, we consider the existing edges as positively signed and non-existing edges as negatively signed. 

The distribution of the \textsc{NonAgreement}s for each dataset is illustrated in Figures \ref{fig:non:agree:1} to \ref{fig:non:agree:7}. The distribution of \textsc{NonAgreement}s of the edges in almost all the datasets obeys the normal distribution, except small imperfections in Arxiv ASTRO-PH (Figure \ref{fig:non:agree:1}), Arxiv COND-MAT (Figure \ref{fig:non:agree:3}), and DBLP (Figure \ref{fig:non:agree:7}). 
Moreover, more detailed statistics on these distributions are given in Tables \ref{tbl:nonagree:stats}. One single observation is that the most frequent value of the \textsc{NonAgreement} in all the sample datasets is $1$. Why is that? Without loss of generality, assume that $\deg_{G}(u) \geq \deg_{G}(v)$. Then, $\card{N_G(v)} = 2\card{N_G(u) \cap N_G(v)}$. By the assumption $\card{N_G(u)} \geq 2 \card{N_G(u) \cap N_G(v)}$, i.e. the intersection of the neighborhood of vertices $u$ and $v$ consists of at most half of the neighborhood of $u$. Also, exactly half of the neighborhood of $v$ falls at the intersection of the neighborhood of $u$ and $v$. Intuitively, the vertex $u$ has clustering preference with extra vertices at least the number of vertices which both $u$ and $v$ have clustering preference. Similarly, the vertex $v$ has clustering preference with exactly half extra vertices which both $u$ and $v$ have clustering preference.

For each dataset, we have used a different set of {\epsschedule}s, depending on the distribution of their \textsc{NonAgreement}s. More precisely: (1) we have sorted the values of non-agreements in each dataset in a non-decreasing order, with repetitions. (2) Then, we have selected $21$ distinct values equally spaces of these values. (3) The \epsschedule was set to the selected values in the second step, which appended the value of $0$ at the beginning and the value of $1.99$ to the end if either does not exists. Totally, the number of \epsschedule for each dataset is either $21$, $22$ or $23$. 

An interesting observation is the number of clusters for $\varepsilon\approx 1^-$ in Figures \ref{fig:cluster:1} to \ref{fig:cluster:7}. Note that we use $1^-$ to denote the interval $[1-\epsilon,1]$ for some non-zero constant $\epsilon>0$. When $\varepsilon \approx 1$ but less than $1$, the number of clusters is minimum. As it gets closer to $1$, the number of clusters increases with a much greater descent than the decrease in the number of clusters as it gets close to $1$ from $0$. 

As in Corollary \ref{cor:monotonicity:result}, by increasing the $\varepsilon$, the number of vertices in $\varepsilon$-agreement is a non-decreasing function, which is confirmed by the plots in Figures \ref{fig:non:agree:1} to \ref{fig:non:agree:7} as the number of vertices in $\varepsilon$-non-agreement is given by a non-increasing function. By closer visual inspection of these figures, we can see that the shape of the plot for the number of $\varepsilon$-non-agreement vertices in all these graphs is almost the same, with inflection point around the value of $\varepsilon\approx 1$. This is due to the intrinsic nature of the $\nonagreement{G}{u}{v}$. 

Similarly, the non-monotonicity result stated in Corollary \ref{cor:monotonicity:result} is observed in the same figures for the number of $\varepsilon$-light vertices. By a visual inspection, the trend of the number of $\varepsilon$-light vertices for almost all datasets, except for the Arxiv HEP-TH (Figure \ref{fig:non:agree:4}) and the EU-Email (Figure \ref{fig:non:agree:6}), is the same: the number of $\varepsilon$-light vertices increases as $\varepsilon$ increases up to some point (first interval), then decreases slightly (second interval), and finally increases and would be asymptotically equal to the number of vertices in the graphs (last interval). For Arxiv HEP-TH and EU-Email, we have the same trend, however, the second interval is very small.

\begin{table}[h]
	\centering
	\caption{Description of the datasets.}
	\label{tbl:datasets}
	\begin{tabular}{|l|r|r|}
		\hline 
		\textbf{Dataset} & \textbf{Nodes} & \textbf{Edges} \\ \hline 
		\textit{Arxiv ASTRO-PH} \cite{condensed} & 18\,772 & 198\,110 \\ \hline 		
		\textit{MUSAE-Facebook} \cite{musae:facebook} & 22\,470 & 171\,002 \\ \hline 
		\textit{Arxiv COND-MAT} \cite{condensed} & 23\,133 & 93\,497 \\ \hline 
		\textit{Arxiv HEP-TH} \cite{citations} & 27\,770 & 352\,807 \\ \hline 
		\textit{Enron-Email} \cite{enron} & 36\,692 & 183\,831 \\ \hline 
		\textit{EU-Email} \cite{condensed} & 265\,214 & 420\,045 \\ \hline 
		\textit{DBLP} \cite{dblp} & 317\,080 & 1\,049\,866 \\ \hline 
	\end{tabular}
\end{table}

\begin{table}[h]
	\centering
	\caption{Statistics of \textsc{NonAgreement} in each dataset.}
	\label{tbl:nonagree:stats}
	\begin{tabular}{|l|r|r|r||r|r|}
		\hline 
		\textbf{Dataset} & \textbf{Distinct} & \textbf{Minimum} & \textbf{Maximum} & \multicolumn{2}{c|}{\textbf{Top 2 frequent values}} \\ \hline 
		\textit{Arxiv ASTRO-PH} & 16\,436 & 0.015\,873\,0 & 1.967\,21 & 1 | 9\,664 & 0.5 | 2\,992 \\ \hline 		
		\textit{MUSAE-Facebook} & 12\,988 & 0.031\,746 & 1.978\,02 & 1 | 18\,534 & 1.25 | 2\,346 \\ \hline
		\textit{Arxiv COND-MAT} & 3\,893 & 0.044\,444\,4 & 1.964\,91 & 1 | 12\,018 & 0.5 | 4\,954 \\ \hline 
		\textit{Arxiv HEP-TH} & 23\,285 & 0.086\,956\,5 & 1.976\,19 & 1 | 24\,852 & 1.333\,33 | 4\,910 \\ \hline 
		\textit{Enron-Email} & 20\,273 & 0.090\,909\,1 & 1.954\,55 & 1 | 31\,704 & 0.5 | 6\,796 \\ \hline 
		\textit{EU-Email} & 28\,612 & 0.117\,647 & 1.990\,52 & 1 | 441\,520 & 0.997\,658 | 682 \\ \hline 
		\textit{DBLP} & 11\,611 & 0.013\,793\,1 & 1.981\,13 & 1 | 167\,590 & 0.5 | 67\,238 \\ \hline 
	\end{tabular}
\end{table}

All the algorithms for the naive and the proposed index-based correlation clustering algorithms are implemented in \texttt{C++}\footnote{\href{https://github.com/alishakiba/Correlation-Clustering-Algorithm-for-Dynamic-Complete-Signed-Graphs-An-Index-based-Approach}{https://github.com/alishakiba/Correlation-Clustering-Algorithm-for-Dynamic-Complete-Signed-Graphs-An-Index-based-Approach}} without any parallelization and the experiments are done using an Ubuntu 22.04.1 TLS with an Intel Core i7-10510U CPU @ 1.80GHz with 12 GB of RAM.
The time for running the naive correlation clustering algorithm (Algorithm \ref{alg:correlation:clustering}), denoted here as CC, as well as the time for the index-based correlation clustering algorithm denoted as ICC, is given in Figures \ref{fig:times:1} to \ref{fig:times:7}). Note that the time reported for the ICC in these figures does not include the required time for constructing the $\mathbb{NAO}$s, as they are constructed once and used throughout the \epsschedule. The running-time to read the graph as well as constructing the $\mathbb{NAO}$s is reported in Table \ref{tbl:time:construction:nao} in milliseconds. The CC and ICC algorithms are the same, except that in CC, the non-agreement values of the edges and the $\varepsilon$-lightness of the vertices are computed for each given value of $\varepsilon$, however, in ICC these are computed and stored in the proposed $\mathbb{NAO}$s structure once and used for clustering with respect to different values of $\varepsilon$. As it can be observed in Figures \ref{fig:times:1} to \ref{fig:times:7}, the running time for the ICC, excluding the time to construct the $\mathbb{NAO}$s for once, is largely smaller than the one for CC. On average, our approach for the described \epsschedule lead to $\%25$ decrease in clustering time. This enhancement comes at the cost of pre-computing the $\mathbb{NAO}$s, which costs on average $\%34$ of the time for a single run of CC, which is quite small and makes the ICC efficient in cases where one requires to have multiple clustering for various values of $\varepsilon$.

\begin{table}[h]
	\centering
	\caption{Time to construct the $\mathbb{NAO}$ for each dataset in milliseconds.}
	\label{tbl:time:construction:nao}
	\begin{tabular}{|l|r|r|}
		\hline 
		\textbf{Dataset} & \textbf{Time to construct $\mathbb{NAO}$ (ms)} & \textbf{Time to read graph (ms)} \\ \hline 
		\textit{Arxiv ASTRO-PH} & 1\,677 & 543 \\ \hline 		
		\textit{MUSAE-Facebook} & 1\,391 & 427 \\ \hline
		\textit{Arxiv COND-MAT} & 521 & 346 \\ \hline 
		\textit{Arxiv HEP-TH} & 12\,530 & 725 \\ \hline 
		\textit{Enron-Email} & 2\,498 & 525 \\ \hline 
		\textit{EU-Email} & 4\,479 & 958 \\ \hline 
		\textit{DBLP} & 4\,620 & 2\,167 \\ \hline 
	\end{tabular}
\end{table}

\begin{figure}[h]
	\centering
	\begin{subfigure}[b]{.49\textwidth}
		\centering
		\includegraphics[width=\textwidth]{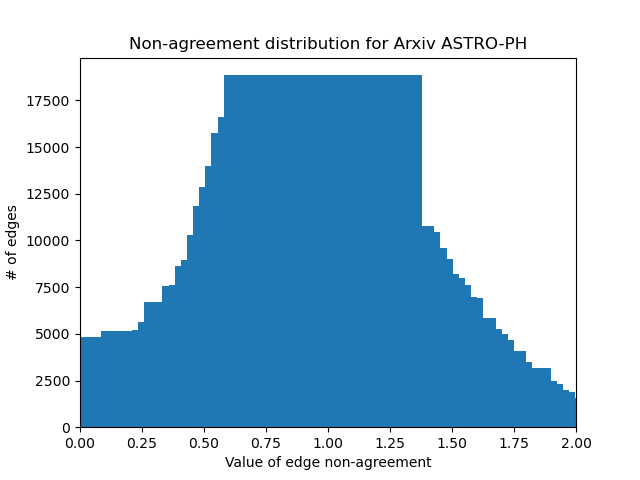}
		\caption{\textsc{NonAgreement}s}
		\label{fig:non:agree:1}
	\end{subfigure}
	\hfill
	\begin{subfigure}[b]{.49\textwidth}
		\centering
		\includegraphics[width=\textwidth]{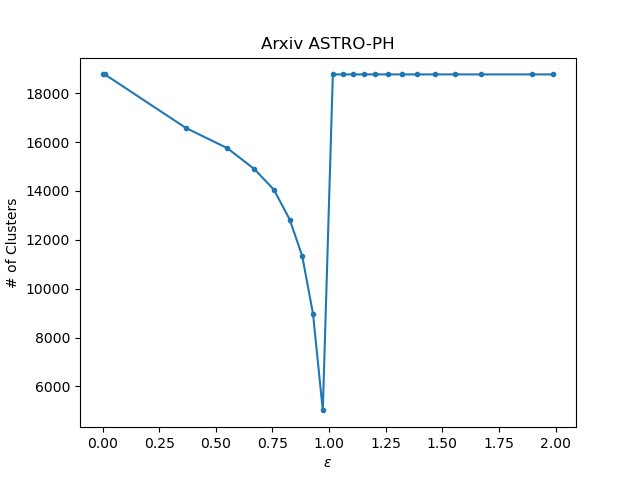}
		\caption{Number of clusters}%
		\label{fig:cluster:1}
	\end{subfigure}
	\\
	\begin{subfigure}[b]{.49\textwidth}
		\centering
		\includegraphics[width=\textwidth]{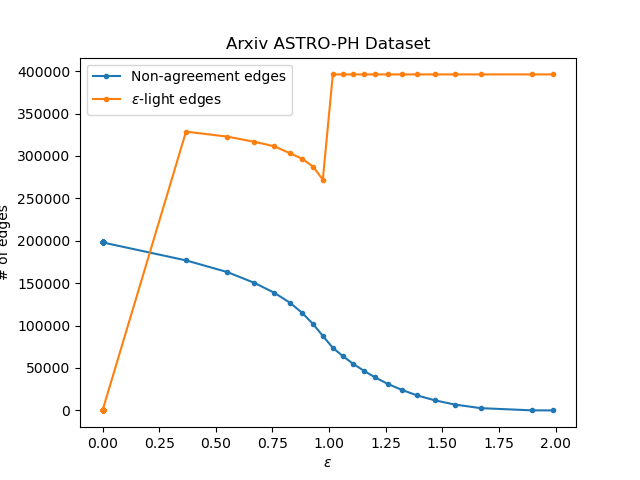}
		\caption{The number of \textsc{NonAgreement} and $\varepsilon$-light edges}%
		\label{fig:non:agree:edges:1}
	\end{subfigure}
	\hfill
	\begin{subfigure}[b]{.49\textwidth}
		\centering
		\includegraphics[width=\textwidth]{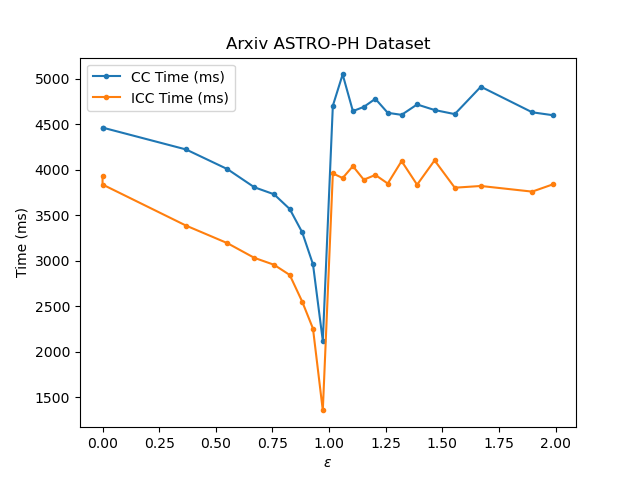}
		\caption{Clustering time in milliseconds}%
		\label{fig:times:1}
	\end{subfigure}
	\caption{The graph Arxiv ASTRO-PH.}
	\label{fig:astro:ph:dataset}
\end{figure}

\begin{figure}[h]
	\centering
	\begin{subfigure}[b]{.49\textwidth}
		\centering
		\includegraphics[width=\textwidth]{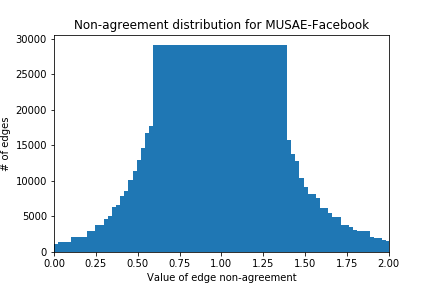}
		\caption{\textsc{NonAgreement}s}%
		\label{fig:non:agree:2}
	\end{subfigure}
	\hfill
	\begin{subfigure}[b]{.49\textwidth}
		\centering
		\includegraphics[width=\textwidth]{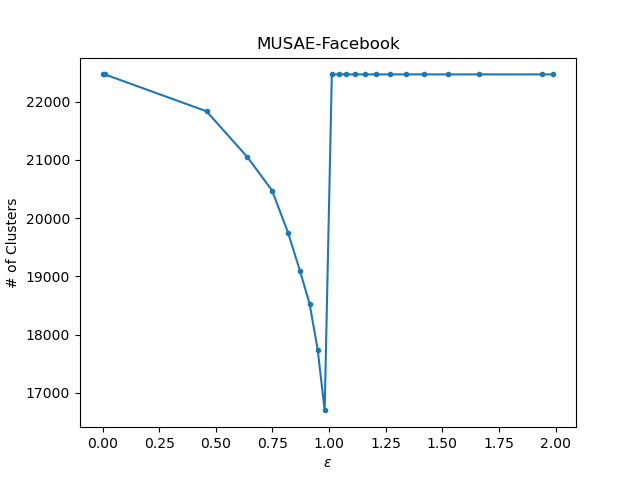}
		\caption{Number of clusters}%
		\label{fig:cluster:2}
	\end{subfigure}
	\\
	\begin{subfigure}[b]{.49\textwidth}
		\centering
		\includegraphics[width=\textwidth]{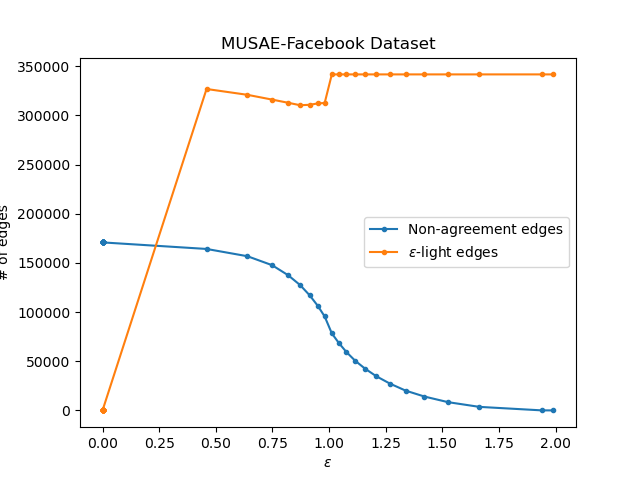}
		\caption{The number of \textsc{NonAgreement} and $\varepsilon$-light edges}%
		\label{fig:non:agree:edges:2}
	\end{subfigure}
	\hfill
	\begin{subfigure}[b]{.49\textwidth}
		\centering
		\includegraphics[width=\textwidth]{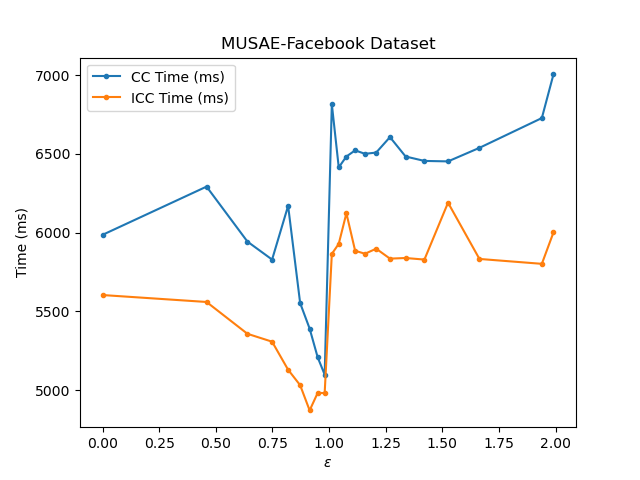}
		\caption{Clustering time in milliseconds}%
		\label{fig:times:2}
	\end{subfigure}
	\caption{The graph MUSAE-Facebook.}
	\label{fig:facebook:dataset}
\end{figure}

\begin{figure}[h]
	\centering
	\begin{subfigure}[b]{.49\textwidth}
		\centering
		\includegraphics[width=\textwidth]{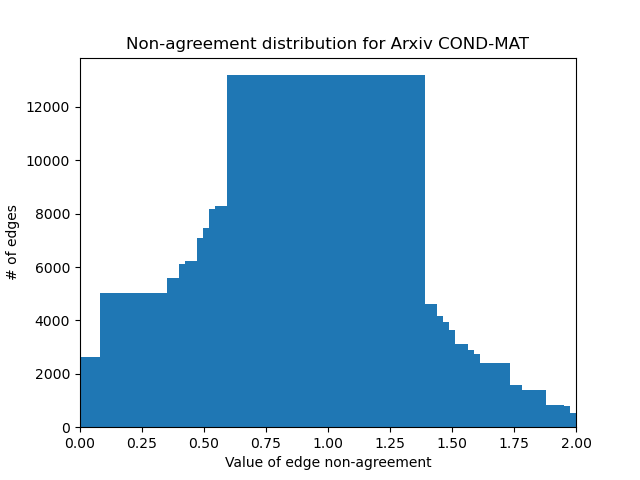}
		\caption{\textsc{NonAgreement}s}%
		\label{fig:non:agree:3}
	\end{subfigure}
	\hfill
	\begin{subfigure}[b]{.49\textwidth}
		\centering
		\includegraphics[width=\textwidth]{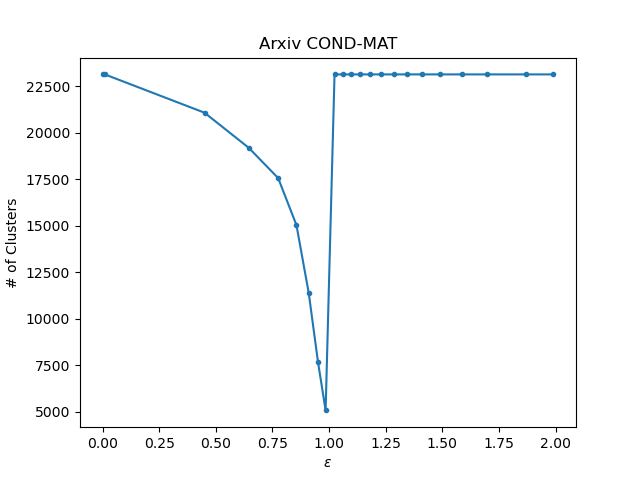}
		\caption{Number of clusters}%
		\label{fig:cluster:3}
	\end{subfigure}
	\\
	\begin{subfigure}[b]{.49\textwidth}
		\centering
		\includegraphics[width=\textwidth]{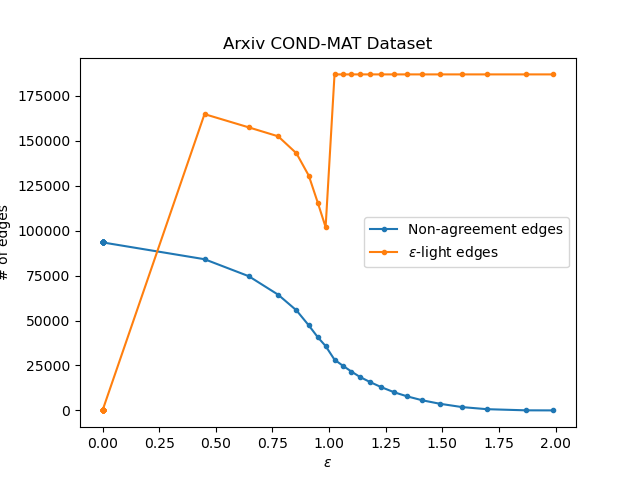}
		\caption{The number of \textsc{NonAgreement} and $\varepsilon$-light edges}%
		\label{fig:non:agree:edges:3}
	\end{subfigure}
	\hfill
	\begin{subfigure}[b]{.49\textwidth}
		\centering
		\includegraphics[width=\textwidth]{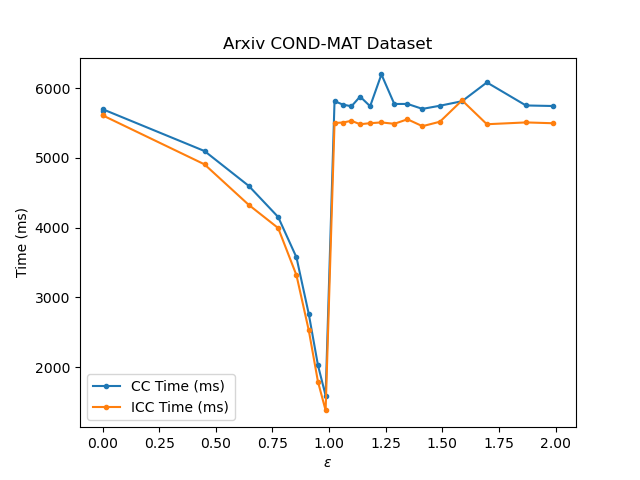}
		\caption{Clustering time in milliseconds}%
		\label{fig:times:3}
	\end{subfigure}
	\caption{The graph Arxiv COND-MAT.}
	\label{fig:cond:mat:dataset}
\end{figure}

\begin{figure}[h]
	\centering
	\begin{subfigure}[b]{.49\textwidth}
		\centering
		\includegraphics[width=\textwidth]{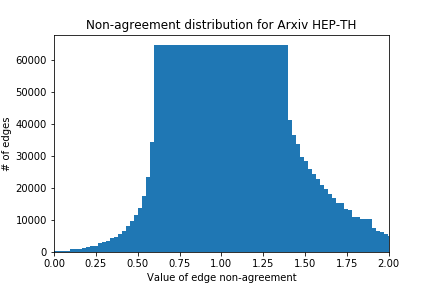}
		\caption{\textsc{NonAgreement}s}%
		\label{fig:non:agree:4}
	\end{subfigure}
	\hfill
	\begin{subfigure}[b]{.49\textwidth}
		\centering
		\includegraphics[width=\textwidth]{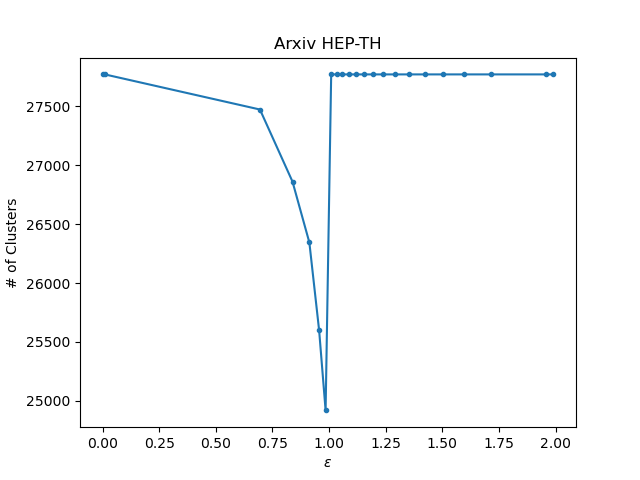}
		\caption{Number of clusters}%
		\label{fig:cluster:4}
	\end{subfigure}
	\\
	\begin{subfigure}[b]{.49\textwidth}
		\centering
		\includegraphics[width=\textwidth]{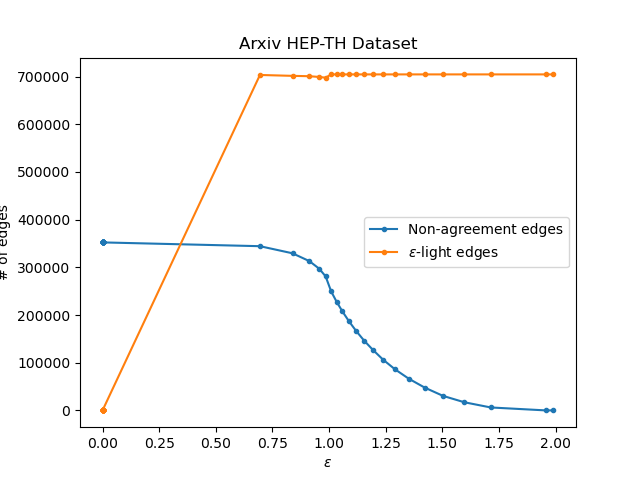}
		\caption{The number of \textsc{NonAgreement} and $\varepsilon$-light edges}%
		\label{fig:non:agree:edges:4}
	\end{subfigure}
	\hfill
	\begin{subfigure}[b]{.49\textwidth}
		\centering
		\includegraphics[width=\textwidth]{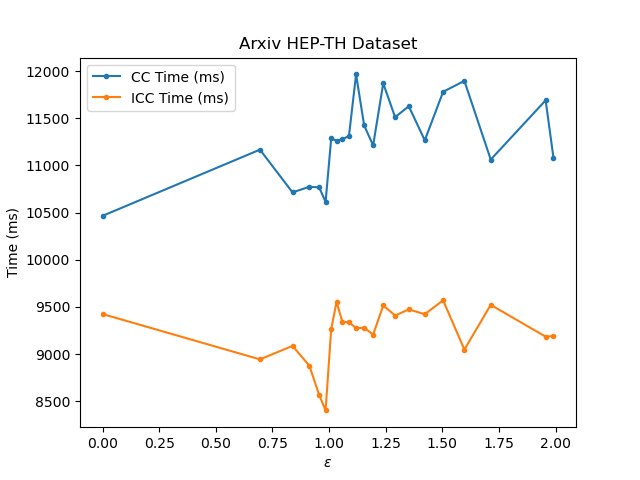}
		\caption{Clustering time in milliseconds}%
		\label{fig:times:4}
	\end{subfigure}
	\caption{The graph Arxiv HEP-TH.}
	\label{fig:hep:th:dataset}
\end{figure}

\begin{figure}[h]
	\centering
	\begin{subfigure}[b]{.49\textwidth}
		\centering
		\includegraphics[width=\textwidth]{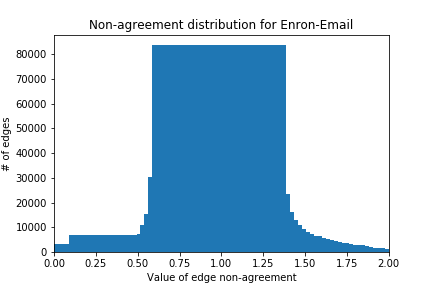}
		\caption{\textsc{NonAgreement}s}%
		\label{fig:non:agree:5}
	\end{subfigure}
	\hfill
	\begin{subfigure}[b]{.49\textwidth}
		\centering
		\includegraphics[width=\textwidth]{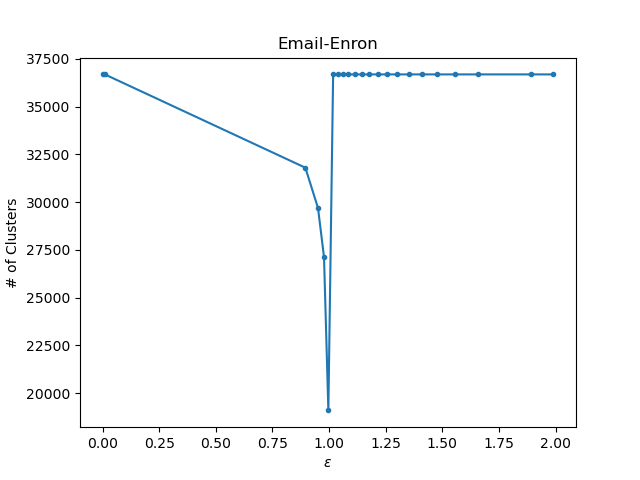}
		\caption{Number of clusters}%
		\label{fig:cluster:5}
	\end{subfigure}
	\\
	\begin{subfigure}[b]{.49\textwidth}
		\centering
		\includegraphics[width=\textwidth]{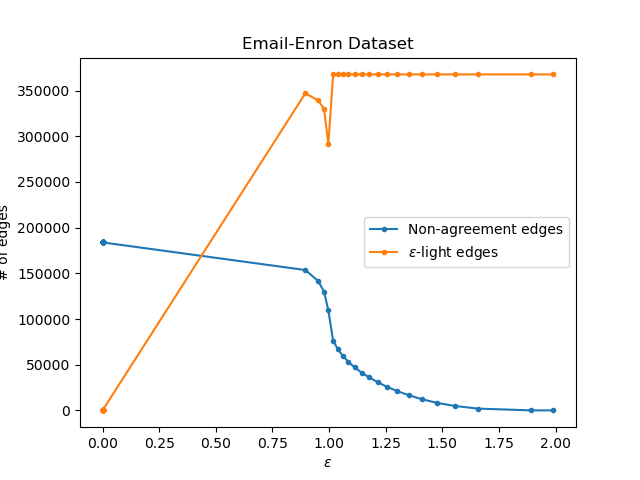}
		\caption{The number of \textsc{NonAgreement} and $\varepsilon$-light edges}%
		\label{fig:non:agree:edges:5}
	\end{subfigure}
	\hfill
	\begin{subfigure}[b]{.49\textwidth}
		\centering
		\includegraphics[width=\textwidth]{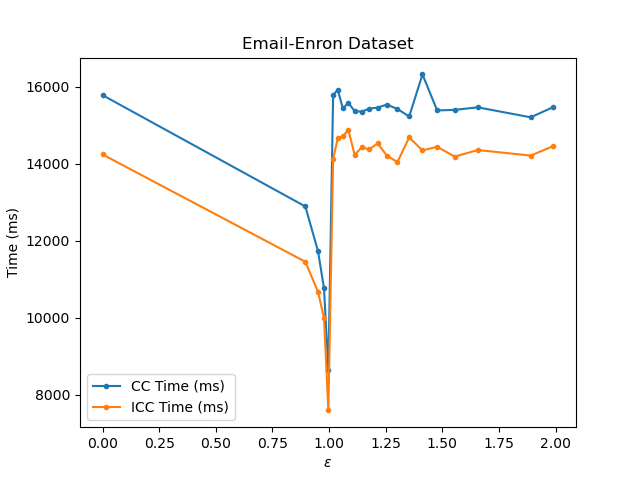}
		\caption{Clustering time in milliseconds}%
		\label{fig:times:5}
	\end{subfigure}
	\caption{The graph Enron-Email.}
	\label{fig:enron:dataset}
\end{figure}

\begin{figure}[h]
	\centering
	\begin{subfigure}[b]{.49\textwidth}
		\centering
		\includegraphics[width=\textwidth]{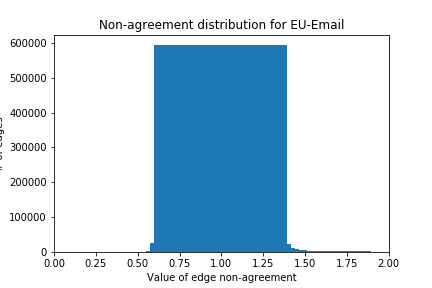}
		\caption{\textsc{NonAgreement}s}%
		\label{fig:non:agree:6}
	\end{subfigure}
	\hfill
	\begin{subfigure}[b]{.49\textwidth}
		\centering
		\includegraphics[width=\textwidth]{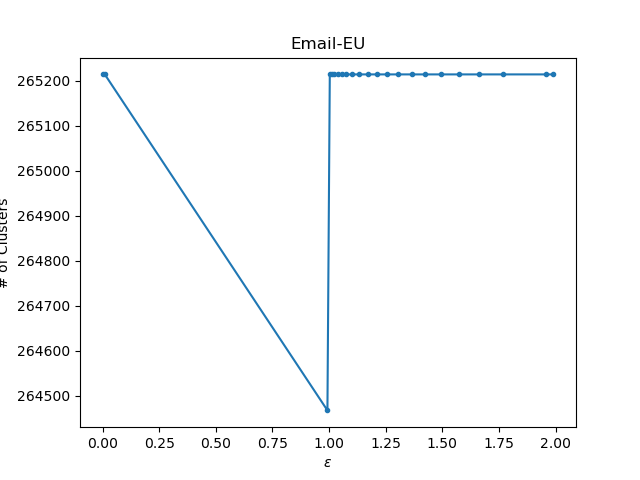}
		\caption{Number of clusters}%
		\label{fig:cluster:6}
	\end{subfigure}
	\\
	\begin{subfigure}[b]{.49\textwidth}
		\centering
		\includegraphics[width=\textwidth]{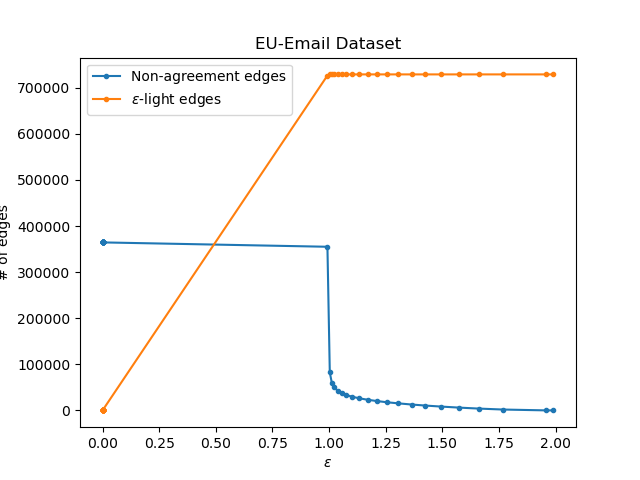}
		\caption{The number of \textsc{NonAgreement} and $\varepsilon$-light edges}%
		\label{fig:non:agree:edges:6}
	\end{subfigure}
	\hfill
	\begin{subfigure}[b]{.49\textwidth}
		\centering
		\includegraphics[width=\textwidth]{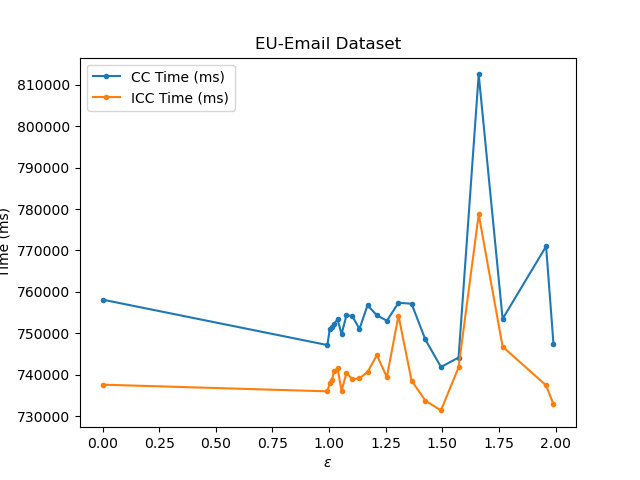}
		\caption{Clustering time in milliseconds}%
		\label{fig:times:6}
	\end{subfigure}
	\caption{The graph EU-Email.}
	\label{fig:eu:dataset}
\end{figure}

\begin{figure}[h]
	\centering
	\begin{subfigure}[b]{.49\textwidth}
		\centering
		\includegraphics[width=\textwidth]{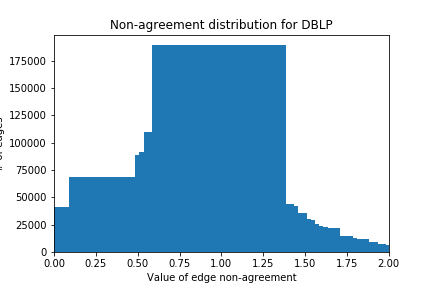}
		\caption{\textsc{NonAgreement}s}%
		\label{fig:non:agree:7}
	\end{subfigure}
	\hfill
	\begin{subfigure}[b]{.49\textwidth}
		\centering
		\includegraphics[width=\textwidth]{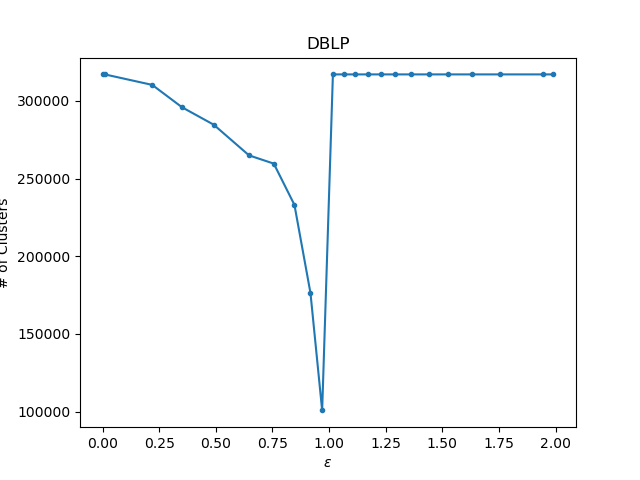}
		\caption{Number of clusters}%
		\label{fig:cluster:7}
	\end{subfigure}
	\\
	\begin{subfigure}[b]{.49\textwidth}
		\centering
		\includegraphics[width=\textwidth]{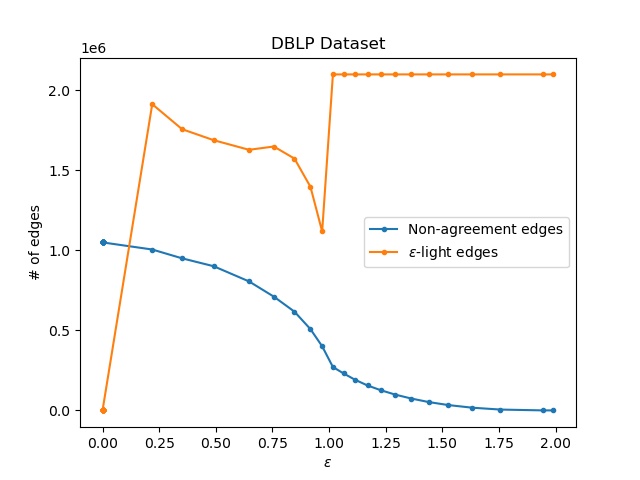}
		\caption{The number of \textsc{NonAgreement} and $\varepsilon$-light edges}%
		\label{fig:non:agree:edges:7}
	\end{subfigure}
	\hfill
	\begin{subfigure}[b]{.49\textwidth}
		\centering
		\includegraphics[width=\textwidth]{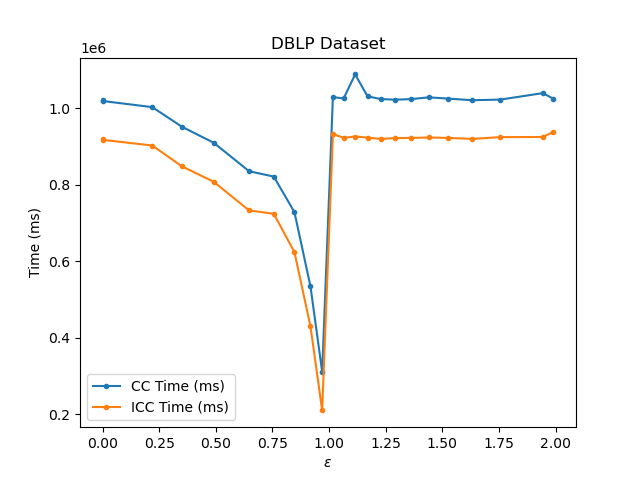}
		\caption{Clustering time in milliseconds}%
		\label{fig:times:7}
	\end{subfigure}
	\caption{The graph DBLP.}
	\label{fig:dblp:dataset}
\end{figure}

	\section{Conclusion}\label{sec:conclusion}
		In this paper, we proposed a novel indexing structure to decrease the overall running-time of an approximation algorithm for the correlation clustering problem. This structure can be constructed in $\bigoh{m\times\alpha(G)}$ time with $\bigoh{m}$ memory. Then, we can output a correlation clustering for any value of $\varepsilon$ in $\bigoh{m+n}$, compared with
$\bigoh{m\times\left( 2+ \alpha (G) \right)+n}$
time complexity of the ordinary correlation clustering algorithm. Moreover, the proposed index can be efficiently maintained during updates to the underlying graph, including edge sign flip, vertex addition and vertex deletion. The theoretical results are accompanied with practical results in the experiments using seven real world graphs. The experimental results show about \%34 decrease in the running-time of queries. 

A future research direction would be studying this algorithm in parallel frameworks such as Map-Reduce and make it scalable to very Big graphs. Another research direction would be enhancing the approximation guarantee of the algorithm, or devising more efficient algorithms in terms of approximation ratio.

%	\bibliographystyle{plain}
%	\bibliography{references}

\begin{thebibliography}{10}
	
	\bibitem{bbc04-correlation-clustering}
	Nikhil Bansal, Avrim Blum, and Shuchi Chawla.
	\newblock Correlation clustering.
	\newblock {\em Machine learning}, 56(1):89--113, 2004.
	
	\bibitem{cn85-arboricity-subgraph-listing-algorithms}
	Norishige Chiba and Takao Nishizeki.
	\newblock Arboricity and subgraph listing algorithms.
	\newblock {\em SIAM Journal on computing}, 14(1):210--223, 1985.
	
	\bibitem{clmp22-ICML-online-consistent-correlation-clustering}
	Vincent Cohen-Addad, Silvio Lattanzi, Andreas Maggiori, and Nikos Parotsidis.
	\newblock Online and consistent correlation clustering.
	\newblock In Kamalika Chaudhuri, Stefanie Jegelka, Le~Song, Csaba Szepesvari,
	Gang Niu, and Sivan Sabato, editors, {\em Proceedings of the 39th
		International Conference on Machine Learning}, volume 162 of {\em Proceedings
		of Machine Learning Research}, pages 4157--4179. PMLR, 17--23 Jul 2022.
	
	\bibitem{clmnpt21-ICML-correlation-clustering-in-constant-many-parallel-rounds}
	Vincent Cohen-Addad, Silvio Lattanzi, Slobodan Mitrovi{\'c}, Ashkan
	Norouzi-Fard, Nikos Parotsidis, and Jakub Tarnawski.
	\newblock Correlation clustering in constant many parallel rounds.
	\newblock In {\em International Conference on Machine Learning {(ICML)}}, pages
	2069--2078. PMLR, 2021.
	
	\bibitem{clrs22-introduction-algorithms}
	Thomas~H Cormen, Charles~E Leiserson, Ronald~L Rivest, and Clifford Stein.
	\newblock {\em Introduction to algorithms}.
	\newblock {MIT} Press, 2022.
	
	\bibitem{citations}
	Jure Leskovec, Jon Kleinberg, and Christos Faloutsos.
	\newblock Graphs over time: densification laws, shrinking diameters and
	possible explanations.
	\newblock In {\em Proceedings of the eleventh ACM SIGKDD international
		conference on Knowledge discovery in data mining}, pages 177--187, 2005.
	
	\bibitem{condensed}
	Jure Leskovec, Jon Kleinberg, and Christos Faloutsos.
	\newblock Graph evolution: Densification and shrinking diameters.
	\newblock {\em ACM transactions on Knowledge Discovery from Data (TKDD)},
	1(1):2--es, 2007.
	
	\bibitem{SNAP}
	Jure Leskovec and Andrej Krevl.
	\newblock {SNAP Datasets}: {Stanford} large network dataset collection.
	\newblock \url{http://snap.stanford.edu/data}, June 2014.
	
	\bibitem{enron}
	Jure Leskovec, Kevin~J Lang, Anirban Dasgupta, and Michael~W Mahoney.
	\newblock Community structure in large networks: Natural cluster sizes and the
	absence of large well-defined clusters.
	\newblock {\em Internet Mathematics}, 6(1):29--123, 2009.
	
	\bibitem{mathieu2010online}
	Claire Mathieu, Ocan Sankur, and Warren Schudy.
	\newblock Online correlation clustering.
	\newblock In {\em 27th International Symposium on Theoretical Aspects of
		Computer Science-STACS 2010}, pages 573--584, 2010.
	
	\bibitem{musae:facebook}
	Benedek Rozemberczki, Carl Allen, and Rik Sarkar.
	\newblock Multi-scale attributed node embedding, 2019.
	
	\bibitem{shakiba2022online}
	Ali Shakiba.
	\newblock Online correlation clustering for dynamic complete signed graphs.
	\newblock {\em arXiv preprint arXiv:2211.07000}, 2022.
	
	\bibitem{dblp}
	Jaewon Yang and Jure Leskovec.
	\newblock Defining and evaluating network communities based on ground-truth.
	\newblock In {\em Proceedings of the {ACM} {SIGKDD} Workshop on Mining Data
		Semantics}, pages 1--8, 2012.
	
\end{thebibliography}

\end{document}